\begin{document}
\title{Fundamental limit of sample generalized eigenvalue based detection of signals in noise using relatively few signal-bearing and noise-only samples}

\author{Raj Rao Nadakuditi
\thanks{Supported by an Office of Naval Research Special Postdoctoral Award under grant N00014-07-1-0269. Portions of this work were presented at the 2007 Asilomar Conference on Signals and Systems.}
        and Jack W. Silverstein \thanks{Supported by the U.S. Army Research Office under Grant W911NF-05-1-0244.}
}
\thanks{Department of Mechanical Engineering, Massachusetts Institute of Technology, Email: raj@mit.edu, Phone: (857) 891 8303, Fax: (617) 253-4358 }
\thanks{Department of Mathematics, North Carolina State University, Email: jack@unity.ncsu.edu, Phone: (857) 891 8303, Fax: (617) 253-4358 }

\markboth{Sample  generalized eigenvalue based detection}{Nadakuditi and Silverstein}
\maketitle

\begin{abstract}
The detection problem in statistical signal processing can be succinctly formulated: Given $m$ (possibly) signal bearing, $n$-dimensional signal-plus-noise snapshot vectors (samples) and $N$ statistically independent $n$-dimensional noise-only snapshot vectors, can one reliably infer the presence of a signal? This problem arises in the context of applications as diverse as radar, sonar, wireless communications, bioinformatics, and machine learning and is the critical first step in the subsequent signal parameter estimation phase.

The signal detection problem can  be naturally posed in terms of the sample generalized eigenvalues. The sample generalized eigenvalues correspond to the eigenvalues of the matrix formed by ``whitening'' the signal-plus-noise sample covariance matrix with the noise-only sample covariance matrix. In this article we prove a fundamental asymptotic limit of sample generalized eigenvalue based detection of signals in arbitrarily colored noise when there are relatively few signal bearing and noise-only samples.

Specifically, we show why when the (eigen) signal-to-noise ratio (SNR) is below a critical value, that is a simple function of $n$, $m$ and $N$, then reliable signal detection, in an asymptotic sense, is not possible. If, however, the eigen-SNR is above this critical value then a simple, new random matrix theory based algorithm, which we present here, will reliably detect the signal even at SNR's close to the critical value. Numerical simulations highlight the accuracy of our analytical prediction and permit us to extend our heuristic definition of the \textit{effective number of identifiable signals in colored noise}. We discuss implications of our result for the detection of weak and/or closely spaced signals  in sensor array processing, abrupt change detection in sensor networks, and clustering methodologies in machine learning.
\end{abstract}

\begin{keywords}
signal detection, random matrices, sample covariance matrix, Wishart distribution, multivariate F distribution
\end{keywords}
\begin{center} \bfseries EDICS Category: SSP-DETC Detection; SAM-SDET Source detection
\end{center}

\IEEEpeerreviewmaketitle
\section{Introduction}
The observation vector, in many signal processing applications, can be modelled as a superposition of a finite number of signals embedded in additive noise. The model order selection problem of inferring the number of signals present is the critical first step in the subsequent signal parameter estimation problem. We consider the class of estimators that determine the model order, \ie, the number of signals, in colored noise from the sample generalized eigenvalues of the signal-plus-noise sample covariance matrix and the noise-only sample covariance matrix pair. The sample generalized eigenvalues \cite{golub96a} precisely correspond to the eigenvalues of the matrix formed by ``whitening'' the signal-plus-noise sample covariance matrix with the noise-only sample covariance matrix (assuming that the number of noise-only samples is greater than the dimensionality of the system so that the noise-only sample covariance matrix is invertible).

Such estimators are used in settings where it is possible to find a portion of the data that contains only noise fields and does not contain any signal information. This is a realistic assumption for many practical applications such as evoked neuromagnetic experiments  \cite{maris03a,sekihara97a,sekihara99a}, geophysical experiments that employ a ``thumper'' or in underwater experiments with a wideband acoustic signal transducer where such a portion can be found in a data portion taken before a stimulus is applied. In applications such as radar or sonar where the signals of interest are narrowband and located in a known frequency band, snapshot vectors collected at a frequency just outside this band can be justified as having the same noise covariance characteristics assuming that we are in the stationary-process-long-observation-time (SPLOT) regime \cite{vantrees02a}.

Our main objective in this paper is to shed new light on this age old problem of detecting signal in noise from finite samples using the sample eigenvalues alone \cite{kailath-wax,zhao86a}. We bring into sharp focus a fundamental statistical limit that  explains precisely when and why, in high-dimensional, sample size limited settings underestimation of the model order is unavoidable. This is in contrast to works in the literature that use simulations, as in \cite{liavas01a}, to highlight the chronically reported symptom of model order estimators underestimating the number of signals without providing insight into whether a fundamental limit of detection is being encountered.

In recent work \cite{raj08a}, we examined this problem in the white noise scenario. The main contribution of this paper is the extension of the underlying idea to the arbitrary (or colored) noise scenario. Analogous to the definition in \cite{raj08a}, we define the \textit{effective number of identifiable signals in colored noise} as the number of the generalized eigenvalues of the population (true) signal-plus-noise covariance matrix and noise-only covariance matrix pair that are greater than a (deterministic) threshold that is a simple function of the number of signal-plus-noise samples, noise-only samples and the dimensionality of the system. Analogous to the white noise case, increasing the dimensionality of the system, by say adding more sensors, raises the detectability threshold so that the effective number of identifiable signals might actually decrease.

An additional contribution of this paper is the development of a simple, new, algorithm for estimating the number of signals based on the recent work of Johnstone \cite{johnstone08a}. Numerical results are used to illustrate the performance of the estimator around the detectability threshold alluded to earlier. Specifically, we observe that if the eigen-SNR of a signal is above a critical value then reliable detection using the new algorithm is possible. Conversely, if the eigen-SNR  is below the critical value then the algorithm, correctly for the reason described earlier, is unable to distinguish the signal from noise.

The paper is organized as follows. We formulate the problem in  Section \ref{sec:lrcf problem formulation} and state the main result in Section \ref{sec:main result}. The effective number of signals is defined in Section \ref{sec:eff signals} along with a discussion on its implications for applications such as array processing, sensor networks and machine learning. A new algorithm for detecting the number of signals is presented in Section \ref{sec:new algorithm}. Concluding remarks are offered in Section \ref{sec:conclusion}. The mathematical proofs of the main result are provided in Section \ref{sec:appendix}.

\section{Problem formulation}
\label{sec:lrcf problem formulation}

We observe $m$ samples (``snapshots'') of possibly signal bearing $n$-dimensional snapshot vectors ${\bf x}_{1}, \ldots, {\bf x}_{m}$ where for each $i$, the snapshot vector has a (real or complex) multivariate normal distribution, \ie, ${\bf x}_{i} \sim \mathcal{N}_{n}(0,{\bf R})$ and the ${\bf x}_{i}$'s are mutually independent.  The snapshot vectors are modelled as
\begin{equation}\label{eq:superposition problem}
{\bf x}_{i} = {\bf A}\,{\bf s}_{i}+{\bf z}_{i}  \qquad \textrm{for } i = 1,\ldots,m,
\end{equation}
where ${\bf z}_{i} \sim \mathcal{N}_{n}(0,\Sigma)$, denotes an $n$-dimensional (real or complex) Gaussian noise vector where the noise covariance $\Sigma$ may be known or unknown, ${\bf s}_{i} \sim \mathcal{N}_{k}({\bf 0},{\bf R}_{s})$ denotes a $k$-dimensional (real or complex) Gaussian signal vector with covariance ${\bf R}_{s}$, and ${\bf A}$ is a $n \times k$  unknown non-random matrix.
Since the signal and noise vectors are independent of each other, the covariance matrix of ${\bf x}_{i}$ can hence be decomposed as
\begin{equation}\label{eq:R model}
{\bf R} =  {\bf \Psi} + \bm{\Sigma}
\end{equation}
where
\begin{equation}
{\bf \Psi} = {\bf A}{\bf R}_{s}{\bf A}' ,
\end{equation}
with $'$ denoting the complex conjugate or real transpose. Assuming that the matrix ${\bf A}$ is of full column rank, \ie, the columns of ${\bf A}$ are linearly independent, and that the covariance matrix of the signals ${\bf R}_{s}$ is nonsingular, it follows that the rank of ${\bf \Psi}$ is $k$. Equivalently, the $n-k$ smallest eigenvalues of ${\bf \Psi}$ are equal to zero.

If the noise covariance matrix $\bm{\Sigma}$ were known apriori and was non-singular, a ``noise whitening'' transformation may be applied to the snapshot vector ${\bf x}_{i}$ to obtain the vector
\begin{equation}
\widetilde{{\bf x}}_{i} = \bm{\Sigma}^{-1/2}{\bf x}_{i},
\end{equation}
which will also be normally distributed with covariance
\begin{equation}\label{eq:Rsigma}
{\bf R}_{\bm{\Sigma}} := \bm{\Sigma}^{-1/2}{\bf R}\bm{\Sigma}^{-1/2} = \bm{\Sigma}^{-1}{\bf \Psi} + {\bf I}.
\end{equation}
Denote the eigenvalues of ${\bf R}_{\bm{\Sigma}}$ by $\lambda_{1}\geq \lambda_{2} \geq \ldots \geq \lambda_{n}$. Recalling the formulation of the generalized eigenvalue problem \cite{golub96a}[Section 8.7], we note that the eigenvalues of ${\bf R}_{\bm{\Sigma}}$ are exactly the generalized eigenvalues of the regular matrix pair $(\widehat{{\bf R}},\widehat{\bm{\Sigma}})$. Then, assuming that the rank of $\bm{\Sigma}^{-1}\bm{\Psi}$ is also $k$, it follows that the smallest $n-k$ eigenvalues of ${\bf R}_{\bm{\Sigma}}$ or, equivalently, the generalized eigenvalues of the matrix pair $({\bf R},\bm{\Sigma})$), are all equal to $1$ so that
\begin{equation}
\lambda_{k+1}= \lambda_{k+2} = \ldots = \lambda_{n} = \lambda = 1,
\end{equation}
while the remaining $k$ eigenvalues ${\bf R}_{\bm{\Sigma}}$ of will be strictly greater than one.

Thus, if the true signal-plus-noise covariance matrix ${\bf R}$ and the noise-only covariance matrix $\bm{\Sigma}$ were known apriori, the number of signals $k$ could be trivially determined from the multiplicity of the eigenvalues of ${\bf R}_{\bm{\Sigma}}$ equalling one.

The problem in practice is that the signal-plus-noise and the noise covariance matrices ${\bf R}$ are unknown so that such a straight-forward algorithm cannot be used. Instead we have an estimate the signal-plus-covariance matrix obtained as
\begin{equation}
\widehat{{\bf R}} = \frac{1}{m} \sum_{i=1}^{m} {\bf x}_{i} {\bf x}_{i}'
\end{equation}
and an estimate of the noise-only sample covariance matrix obtained as
\begin{equation}
\widehat{\bm{\Sigma}} = \frac{1}{N} \sum_{j=1}^{N} {\bf z}_{j} {\bf z}_{j}'
\end{equation}
where ${\bf x}_{i}$ for $i = 1, \ldots, m$  are (possibly) signal-bearing snapshots and ${\bf z}_{j}$ for $j = 1, \ldots, N$ are independent noise-only snapshots. We assume here that the number of noise-only snapshots exceeds the dimensionality of the system, \ie, $N>n+1$, so that the noise-only sample covariance matrix $\widehat{\bm{\Sigma}}$, which has the Wishart distribution \cite{wishart28a}, is non-singular and hence invertible with probability 1  \cite[Chapter 3, pp. 97]{muirhead82a},\cite[Chapter 7.7, pp. 272-276]{anderson03a}.  Following (\ref{eq:Rsigma}), we then form the matrix
\begin{equation}\label{eq:RhatSigmahat}
\widehat{{\bf R}}_{\widehat{\Sigma}} = \widehat{\bm{\Sigma}}^{-1}\widehat{{\bf R}},
\end{equation}
and compute its eigen-decomposition to obtain the eigenvalues of $\widehat{{\bf R}}_{\widehat{\Sigma}}$, which we denote by $\hat{\lambda_{1}} \geq \hat{\lambda}_{2} \geq \ldots \geq \hat{\lambda}_{n}$. We note, once again, that the eigenvalues of $\widehat{{\bf R}}_{\widehat{\Sigma}}$ are simply the generalized eigenvalues of the regular matrix pair $(\widehat{{\bf R}},\widehat{\bm{\Sigma}})$.  Note that whenever $N < n$, the signal-plus-noise sample covariance matrix ${\bf R}$ will be singular so that the $n-N$ generalized eigenvalues will equal zero, \ie, $\hat{\lambda}_{N+1} = \hat{\lambda}_{N+2} = \ldots = \hat{\lambda}_{n} = 0$. Figure \ref{fig:eig blurring} illustrates why the blurring of the sample eigenvalues relative to the population eigenvalues makes the problem more challenging.

In this paper, we are interested in the class of algorithms that infer the number of signals buried in arbitrary noise from the eigenvalues of $\widehat{{\bf R}}_{\widehat{\bm{\Sigma}}}$ or $\widehat{{\bf R}}_{\bm{\Sigma}}$ alone.  Such algorithms are widely used in practice and arise naturally from classical multivariate statistical theory \cite{johnstone08a} where the matrix $\widehat{{\bf R}}_{\widehat{\Sigma}}$ is referred to as the multivariate F matrix  \cite{muirhead82a,silverstein85a}. The information theoretical approach to model order estimation, first introduced by Wax and Kailath \cite{kailath-wax}, was extended to the colored noise setting by Zhao et al in \cite{zhao86b} who prove consistency of their estimator in the large sample size regime; their analysis does not yield any insight into the finite sample setting.

Consequently, research has focussed on developing sophisticated techniques for improving performance of eigenvalue based methods in the finite sample setting. Zhu et al \cite{zhu91a} improve the performance of their eigenvalue estimator by assuming a model for the noise covariance matrix. Stoica and Cedervall \cite{stoica97a}  improve the performance of their estimator in two reasonable settings: one, where it is reasonable to assume that the noise covariance matrix is block diagonal or banded and two, where the temporal correlation of the noise has a shorter length than the signals. Other techniques in the literature exploit other characteristics of the signal or noise to effectively reduce the dimensionality of the signal subspace and improve model order estimation given finite samples. See for example  \cite{xu94a,larocque02a} and the references in \cite{raj08a}.

Informally speaking, it is evident that performance of such model order estimation algorithms is coupled to the ``quality'' of the estimated signal-plus-noise and noise-only  covariance matrices which in turn are dependent on the number of snapshots used to estimate them, respectively. Researchers applying these techniques have noted the absence of a mathematically rigorous, general purpose formula in the literature for predicting the minimum number of samples needed to obtain ``good enough'' detection accuracy (see, for example \cite{sekihara97a}[pp. 846]. A larger, more fundamental question that has remained unanswered, till now, is whether there is a statistical limit being encountered.

We tackle this problem head on in this paper by employing sophisticated techniques from random matrix theory in \cite{silverstein:book}. We show that in an asymptotic sense, to be made precise later, that only the ``signal'' eigenvalues of ${\bf R}_{\Sigma}$ that are above a deterministic threshold can be reliably distinguished from the ``noise'' eigenvalues. The threshold is a simple, deterministic function  of the the dimensionality of the system, the number of noise-only and signal-plus-noise snapshots, and the noise and signal-plus noise covariance, and described explicitly next. Note the applicability of the results to the situation when the signal-plus-noise covariance matrix is singular.

\begin{figure}[t]
\centering
\includegraphics[height=3.5in]{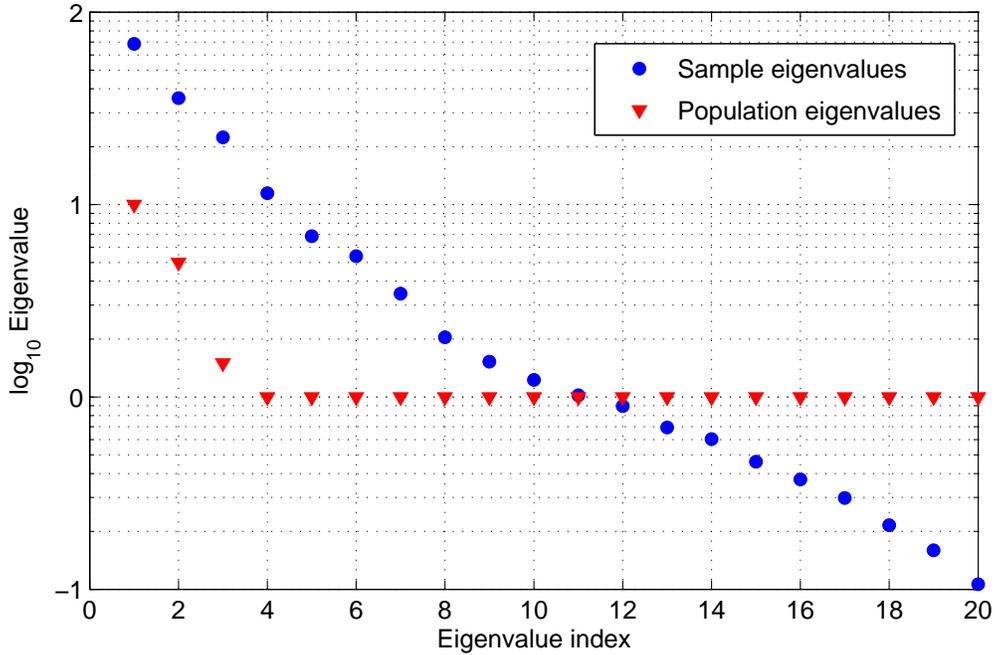}
\caption{The dimension of the ``noise'' subspace is equal to the multiplicity of the population eigenvalue equal to one. When the population eigenvalues are known, then detecting the number of signals becomes trivial. However, estimating the number of signals from the sample generalized eigen-spectrum is considerably more challenging because of the finite sample effects. Specifically, the finite number of noise-only and signal-plus-noise samples induces a blurring in the sample eigenspectrum relative to the population eigenspectrum makes discrimination of the ``signal'' from the ``noise'' challenging. The figure shows one random instance generated for a $n=20$ dimensional system with $N = 25$ noise-only samples and $m = 40$ signal-plus-noise bearing samples.}
\label{fig:eig blurring}
\end{figure}

\section{Main result}\label{sec:main result}
For a Hermitian matrix ${\bf A}$ with $n$ real eigenvalues (counted with multiplicity), the empirical distribution function (e.d.f.) is defined as
\begin{equation}\label{eq:edf definition}
F^{A}(x) = \frac{\textrm{Number of eigenvalues of } {\bf A} \leq x}{n}.
\end{equation}
Of particular interest is the convergence of the e.d.f. of $\widehat{{\bf R}}_{\widehat{\Sigma}}$ in the signal-free case, which is described next.

\begin{theorem}\label{prop:mandp density}
Let $\widehat{{\bf R}}_{\widehat{\Sigma}}$ denote the matrix in (\ref{eq:RhatSigmahat}) formed from $m$ (complex Gaussian) noise-only snapshots and $N$ independent noise-only (complex Gaussian) snapshots. Then the e.d.f. $F^{\widehat{{\bf R}}_{\widehat{\Sigma}}}(x) \to F^{R_{\Sigma}}(x)$ almost surely for every $x$, as $m,n(m) \to \infty$, $m,N(m) \to \infty$ and $c_{m} = n/m \to c >0$ and $c^{1}_{N}=n/N \to c_1 <1$ where

\begin{equation}\label{eq:mandp density}
dF(x) = \max\left(0,\left(1-\frac{1}{c}\right)\right)\delta(x)+\frac{(1-c_1)\sqrt{(x-b_1)(b_2-x)}}{2\pi x(xc_1+c)}\mathbb{I}_{[b_1,b_2]}(x)\, dx,
\end{equation}
where
\begin{equation}
b_1=\left(\frac{1-\sqrt{1-(1-c)(1-c_1)}}{1-c_1}\right)^2,\quad
b_2=\left(\frac{1+\sqrt{1-(1-c)(1-c_1)}}{1-c_1}\right)^2,
\end{equation}
$\mathbb{I}_{[b_1,b_2]}(x) = 1$ when $b_1 \leq x \leq b_2$ and zero otherwise, and $\delta(x)$ is the Dirac delta function.
\end{theorem}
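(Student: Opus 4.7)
The starting move is to reduce to the isotropic case $\bm{\Sigma}=I$. In the signal-free setting, both sets of snapshots are i.i.d.\ $\mathcal{N}_n(0,\bm{\Sigma})$, so we may write $\widehat{{\bf R}}=\bm{\Sigma}^{1/2}A\bm{\Sigma}^{1/2}$ and $\widehat{\bm{\Sigma}}=\bm{\Sigma}^{1/2}B\bm{\Sigma}^{1/2}$, where $A=(1/m)XX^*$ and $B=(1/N)YY^*$ with $X\in\mathbb{C}^{n\times m}$ and $Y\in\mathbb{C}^{n\times N}$ independent matrices of i.i.d.\ standard complex Gaussians. Then $\widehat{\bm{\Sigma}}^{-1}\widehat{{\bf R}}=\bm{\Sigma}^{-1/2}(B^{-1}A)\bm{\Sigma}^{1/2}$ is similar to the classical multivariate $F$ matrix $B^{-1}A$, so its spectrum, and hence the e.d.f.\ $F^{\widehat{{\bf R}}_{\widehat{\Sigma}}}$, is independent of $\bm{\Sigma}$.

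Next, I would extract the limit via the Stieltjes transform $m_n(z)=n^{-1}\mathrm{tr}(B^{-1}A-zI)^{-1}$ for $z\in\mathbb{C}^+$. Since the e.d.f.s of $A$ and $B$ converge almost surely to Marchenko--Pastur laws with parameters $c$ and $c_1$, I would condition on $B$ and use the identity $(B^{-1}A-zI)^{-1}=B(A-zB)^{-1}$ to run a one-row-at-a-time Sherman--Morrison expansion of $A-zB$, expressing the limit of $m_n$ as a functional of $F^B$. Substituting the known limit of $F^B$ yields a quadratic equation for the limiting Stieltjes transform $m(z)$. An equivalent route uses Voiculescu's multiplicative $S$-transform, since $B^{-1}A$ is an asymptotically free product of a Marchenko--Pastur variable and the inverse of one. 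Applying the Stieltjes inversion formula $f(x)=\pi^{-1}\lim_{\eta\downarrow 0}\operatorname{Im} m(x+i\eta)$ then produces the density on $[b_1,b_2]$: the square root $\sqrt{(x-b_1)(b_2-x)}$ comes from the discriminant, its zeros identify $b_1$ and $b_2$, and the prefactor $(1-c_1)/[2\pi x(xc_1+c)]$ drops out of the algebra. The atom at $0$ of mass $\max(0,1-1/c)$ is purely algebraic: whenever $m<n$, $\mathrm{rank}(A)\le m$ forces at least $n-m$ zero eigenvalues of $B^{-1}A$, contributing an exact point mass $(n-m)/n\to 1-1/c$.

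The hard part is promoting these identities to almost sure convergence of $m_n(z)\to m(z)$ at each $z\in\mathbb{C}^+$: two independent sources of randomness must be controlled simultaneously, and inversion of $B$ is an unbounded operation. The hypothesis $c_1<1$ is essential here; by Bai--Yin, the smallest eigenvalue of $B$ stays bounded away from zero almost surely, so $\|B^{-1}\|$ is bounded in $n$ almost surely, and the rows of $X$ and $Y$ enter the resolvent smoothly enough for a McDiarmid/Burkholder martingale-difference argument to give concentration of $m_n$ about its expectation. Standard dominated-convergence arguments then combine with the fixed-point equation to deliver almost sure convergence of $F^{\widehat{{\bf R}}_{\widehat{\Sigma}}}$ at every continuity point of the limit. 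This is precisely the technical content of Silverstein's 1985 $F$-matrix analysis \cite{silverstein85a}, which is the natural framework to invoke for the rigorous details.
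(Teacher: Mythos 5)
Your proposal is correct and takes the same ultimate route as the paper: the paper's entire proof is a citation to Silverstein's 1985 analysis of the multivariate $F$ matrix \cite{silverstein85a}, which is precisely the framework you sketch (reduction to $\bm{\Sigma}=I$ by similarity, Stieltjes-transform fixed point, inversion for the density, rank argument for the atom at $0$, Bai--Yin to control $\|B^{-1}\|$ under $c_1<1$) and then explicitly defer to for the rigorous details. One harmless algebra slip: $(B^{-1}A - zI)^{-1} = (A - zB)^{-1}B$, not $B(A - zB)^{-1}$, though the two agree under the trace so your argument is unaffected.
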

\begin{proof}
This result was proved in \cite{silverstein85a}. When $c_{1} \to 0$ we recover the famous Mar\v{c}enko-Pastur density \cite{marcenko67a}.
\end{proof}

The following result exposes when the ``signal'' eigenvalues are asymptotically distinguishable from the ``noise'' eigenvalues.

\begin{theorem}\label{prop:spiked convergence}
Let $\widehat{{\bf R}}_{\widehat{\Sigma}}$ denote the matrix in (\ref{eq:RhatSigmahat}) formed from $m$ (real or complex Gaussian) signal-plus-noise snapshots and $N$ independent (real or complex Gaussian) noise-only snapshots. Denote the eigenvalues of ${{\bf R}}_{{\Sigma}}$ by $\lambda_{1} \geq \lambda_{2} > \ldots \geq \lambda_{k} > \lambda_{k+1} = \ldots \lambda_{n} = 1$. Let $l_{j}$ denote the $j$-th largest eigenvalue of $\widehat{{\bf R}}_{\widehat{\Sigma}}$. Then as $n,m(n) \to \infty$, $n,N(n) \to \infty$ and $c_{m} = n/m \to c >0$ and $c^{1}_{N}=n/N \to c_1 <1$ we have

\begin{figure*}[h]
\begin{equation*}
l_{j} \to
\begin{cases}
\lambda_{j} \left( 1-c-c {\dfrac {-c_{1}\,\lambda_{j}-\lambda_{j}+1+\sqrt {{c_{1}}^{2}{\lambda_{j}}^{2}-2\,c_{1}\,{\lambda_{j}}^{2}-2\,c_{1}\,\lambda_{j}+{\lambda_{j}}^{2}-2\,\lambda_{j}+1}}{2c_{1}\,{\lambda_{j}}}}\right),  &\lambda_{j} > \tau(c,c_{1}) \\
& \\
{\dfrac {-\,c_{1}\,c+\,c+1+\,c_{1} + 2\,\sqrt {c+c_{1}-c_{1}c}}{{c_{1}}^{2}+1-2\,c_{1}}},  &\lambda_{j} \leq \tau(c,c_{1}) \\
\end{cases}
\hrulefill
\end{equation*}
\end{figure*}
for $j =1, \ldots, k$ and the convergence is almost surely and the threshold ${\rm T}(c,c_{1})$ is given by

\begin{equation}\label{eq:thresholdSNR}
{\rm T}(c,c_1) = \dfrac{1+\tau-\tau c_1+\sqrt{(1+\tau-\tau c_1)^2-4\tau}}{2},
\end{equation}
where
\begin{equation}
\tau=\dfrac{(1+c_1)\alpha+\sqrt{\alpha}\sqrt{4\alpha-c_1+(1-c_1)^2c^2}}
{(1-c_1)^2\alpha}=\frac{(1+c_1)\alpha+\sqrt{\alpha}(2c_1+c(1-c_1))}
{(1-c_1)^2\alpha}
\end{equation}
and $\alpha = c+c_1-c_1c$.
\end{theorem}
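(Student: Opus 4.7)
The plan is to use invariance to reduce to a canonical spiked model, then to apply the outlier-eigenvalue framework for finite-rank multiplicative perturbations of the multivariate $F$ ensemble.

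\textbf{Reduction.} Since the spectrum of $\widehat{\bm{\Sigma}}^{-1}\widehat{{\bf R}}$ is unchanged under the simultaneous whitening $(\widehat{{\bf R}},\widehat{\bm{\Sigma}}) \mapsto (\bm{\Sigma}^{-1/2}\widehat{{\bf R}}\bm{\Sigma}^{-1/2},\bm{\Sigma}^{-1/2}\widehat{\bm{\Sigma}}\bm{\Sigma}^{-1/2})$, I would assume without loss of generality that $\bm{\Sigma} = {\bf I}$. By rotational invariance of the Gaussian snapshots I may further take ${\bf R} = {\bf R}_{\bm{\Sigma}} = \mathrm{diag}(\lambda_1,\ldots,\lambda_k,1,\ldots,1)$, so that ${\bf R}^{1/2} = {\bf I} + {\bf E}$ with ${\bf E}$ a diagonal rank-$k$ matrix supported on the first $k$ coordinates. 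Writing $\widehat{{\bf R}} = {\bf R}^{1/2}{\bf S}_Y{\bf R}^{1/2}$ and $\widehat{\bm{\Sigma}} = {\bf S}_Z$ with ${\bf S}_Y, {\bf S}_Z$ independent standard complex Wisharts on $m$ and $N$ degrees of freedom, the $F$-matrix $\widehat{\bm{\Sigma}}^{-1}\widehat{{\bf R}}$ is exhibited as a rank-$\leq 2k$ deformation of the null $F$-matrix ${\bf S}_Z^{-1}{\bf S}_Y$, whose limiting spectrum was described in Theorem~\ref{prop:mandp density}.

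\textbf{Reduced determinant equation.} Applying the Schur complement / Sylvester identity to $\det(\widehat{{\bf R}} - x\widehat{\bm{\Sigma}}) = 0$ peels off the low-rank piece and reduces the outlier condition to the vanishing of a $k \times k$ determinant
\begin{equation*}
\det\!\bigl({\bf I}_k + \mathrm{diag}(\lambda_i-1)_{i=1}^{k}\,{\bf G}_n(x)\bigr) = 0,
\end{equation*}
where ${\bf G}_n(x)$ is built from bilinear forms of the first $k$ standard basis vectors against the null resolvents $({\bf S}_Y - x{\bf S}_Z)^{-1}$ and ${\bf S}_Z^{-1}({\bf S}_Y - x{\bf S}_Z)^{-1}$. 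Standard concentration of quadratic forms in Gaussian vectors (Bai--Silverstein-type bounds), combined with the bulk limit of Theorem~\ref{prop:mandp density}, yields the almost-sure deterministic limit ${\bf G}_n(x) \to \psi(x)\,{\bf I}_k$, uniformly on compact subsets of $(b_2,\infty)$, where $\psi$ is a closed-form rational combination of the Stieltjes transform $m_F(x)$ of (\ref{eq:mandp density}) and $x m_F(x)$.

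\textbf{Outlier limit and phase transition.} Consequently, each ``signal'' eigenvalue $l_j$ (for $j \leq k$) converges almost surely to the unique solution $x = x(\lambda_j) \in (b_2,\infty)$ of the scalar master equation $1 + (\lambda_j - 1)\psi(x) = 0$, whenever such a solution exists. The critical level $\mathrm{T}(c,c_1)$ is the value $\lambda_{*}$ for which $x(\lambda_{*}) = b_2$, i.e.\ $\lambda_{*} = 1 - 1/\psi(b_2^{+})$. For $\lambda_j > \mathrm{T}(c,c_1)$, inverting the master equation (a quadratic in $x$) produces the first branch of the displayed limit; for $\lambda_j \leq \mathrm{T}(c,c_1)$ no solution in $(b_2,\infty)$ exists, and a ``no eigenvalues outside the support'' argument for the null $F$-matrix combined with Weyl-type interlacing for the rank-$2k$ perturbation forces $l_j \to b_2$, which coincides with the second branch.

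\textbf{Main obstacle.} I expect the principal effort to be algebraic rather than probabilistic: the closed-form evaluation of $\psi$ from the density in Theorem~\ref{prop:mandp density}, the inversion of $1 + (\lambda - 1)\psi(x) = 0$ with the correct choice of square-root branches, and the identification of $1 - 1/\psi(b_2^{+})$ with the stated threshold $\mathrm{T}(c,c_1)$ and its auxiliary quantity $\tau$ all require patient bookkeeping in the joint parametrisation by $(c,c_1)$. The random-matrix ingredients themselves --- concentration of bilinear forms and edge-rigidity of the null $F$-matrix spectrum --- are by now standard and can be invoked directly from the theory in \cite{silverstein:book}.
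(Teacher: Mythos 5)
Your route is genuinely different from the paper's. The paper does \emph{not} attack the determinant $\det(\widehat{\bf R} - x\widehat{\bm\Sigma})=0$ directly. Instead, it develops a general theory for finite-rank multiplicative perturbations of ${\bf B}_n = m^{-1}{\bf T}_n^{1/2}{\bf X}_n{\bf X}_n^*{\bf T}_n^{1/2}$ through the inverse Stieltjes-transform map $x_{c,H}(m)=-1/m + c\int\lambda(1+\lambda m)^{-1}dH(\lambda)$ and the interval-separation results of Bai--Silverstein (Lemmas \ref{lem:jack 1}--\ref{lem:jack 2}, Theorems \ref{th:jack 1}--\ref{th:jack 3}); the criterion for a spike $t'$ of ${\bf T}_n$ to produce an outlier is $g(t')<1$ where $g(t)=c\int\lambda^2(\lambda-t)^{-2}dH(\lambda)$, and the limiting location is $x_{c,H}(-1/t')$. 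It then specializes (Theorem \ref{th:jack 4}) to ${\bf T}_n=(\,N^{-1}{\bf Y}_n{\bf Y}_n^*)^{-1}$ plus spikes, and converts your problem into that form by taking ${\bf T}_n = {\bf R}^{1/2}\widehat{\bm\Sigma}^{-1}{\bf R}^{1/2}$ — so the proof is inherently two-stage: a preliminary spiked-Wishart step identifies the limiting spike $t'$ of ${\bf T}_n$ as a function of $\lambda_j$ and $c_1$, and then the general $F$-matrix spiked theorem converts $t'$ into $\lambda(t')$ and the threshold $\tau$. Your resolvent/determinant approach collapses both stages into a single master equation, which is a legitimate alternative in the spirit of finite-rank-perturbation methods.

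That said, there is a concrete gap in your reduction step. You assert that the Schur complement of $\det(\widehat{\bf R}-x\widehat{\bm\Sigma})=0$ produces the scalar-diagonal form $\det\bigl({\bf I}_k + \mathrm{diag}(\lambda_i-1)\,{\bf G}_n(x)\bigr)=0$ with a single matrix function ${\bf G}_n(x)$ converging to $\psi(x){\bf I}_k$. But the perturbation here is multiplicative on the ${\bf R}^{1/2}$ level: with ${\bf R}^{1/2}={\bf I}+{\bf V}\,\mathrm{diag}(\sqrt{\lambda_i}-1)\,{\bf V}^*$, the pencil $\widehat{\bf R}-x\widehat{\bm\Sigma}$ is a rank-$\le 2k$ (not rank-$k$) perturbation of ${\bf W}_X-x{\bf W}_Z$, and a careful Schur/Woodbury reduction produces a $2k\times 2k$ determinant involving a $2\times 2$ block structure in the resolvent bilinear forms (your own parenthetical --- ``against the null resolvents $({\bf S}_Y-x{\bf S}_Z)^{-1}$ \emph{and} ${\bf S}_Z^{-1}({\bf S}_Y-x{\bf S}_Z)^{-1}$'' --- already betrays this). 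Showing that this $2k\times 2k$ determinant factors into $k$ decoupled scalar equations of exactly the form $1+(\lambda_j-1)\psi(x)=0$ is precisely the nontrivial algebraic content; it must be carried out and the two resolvent limits must be expressed through $m_F$, $x m_F$ and the self-consistent equation before the claim is usable. Until you produce $\psi$ explicitly, match the $\lambda_j>\mathrm{T}(c,c_1)$ branch against the stated closed form, and verify $1-1/\psi(b_2^+)=\mathrm{T}(c,c_1)$ (including the $c_1\to 0$ sanity check $\mathrm{T}\to (1+\sqrt c)$), the argument is a plan rather than a proof. The paper's two-step path is arguably the cleaner way to sidestep this bookkeeping, since the Baik--Silverstein formula for $t'$ and the explicit $g$-computation on the inverse Wishart edge (Section \ref{sec:appendix F}) are each comparatively contained.
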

\begin{proof}
The result follows from Theorem \ref{th:jack 4}. The threshold $\Tau(c,c_{1})$ is obtained by solving the inequality
\[
t' > \tau
\]
where for $j = 1, \ldots, k$, $t'$, from \cite{BaikBP04,BaikS06,paul07a,raj08a}, is given by
\[
t' = \dfrac{1}{\lambda_{j} \left(1+ \dfrac{c_{1}}{\lambda_{j}-1}\right)}
\]
and $\tau$ is given by (\ref{eq:jack 9}).

Note that when $c_{1} \to 0$, $\Tau(c,c_1) \to (1+\sqrt{c})$ so that we recover the results of Baik and Silverstein \cite{BaikS06}.
\end{proof}

\subsection{Effective number of identifiable signals}\label{sec:eff signals}

Theorem \ref{prop:spiked convergence} brings into sharp focus the reason why, in the large-system-relatively-large-sample-size limit, model order underestimation is sometimes unavoidable. This motivates our heuristic definition of the \textit{effective number of identifiable signals} below:
\begin{equation}\label{eq:eff signals}
k_{eff}({\bf R},\bm{\Sigma}) = \textrm{\# Eigs. of } \bm{\Sigma}^{-1}{\bf R} > {\rm T}(c,c_1) \approx {\rm T}\left(\dfrac{n}{m},\dfrac{n}{N}\right).
\end{equation}

If we denote the eigenvalues of ${\bf R}_{\bm{\Sigma}} \equiv  \bm{\Sigma}^{-1}{\bf R}$ by $\lambda_{1} \geq \lambda_{2} > \ldots \geq \lambda_{k} > \lambda_{k+1} = \ldots \lambda_{n} = 1$ then we define the eigen-SNR of the $j$-th signal as $\lambda_{j} -1$ then (\ref{eq:eff signals}) essentially states that signals with eigen-SNR's smaller than $\Tau(n/m,n/N)$ will be asymptotically undetectable.

Figure \ref{fig:eigensnr} shows the eigen-SNR threshold $\Tau(c,c_{1})-1$ needed for reliable detection for different values as a function of $c$ for different values of $1/c_{1}$. Such an analytical prediction was not possible before the results presented in this paper. Note the fundamental limit of detection in the situation when the noise-only covariance matrix is known apriori (solid line) and increase in the threshold eigen-SNR needed as the number of snapshots available to estimate the noise-only covariance matrix decreases.

\begin{figure}[t]
\centering
\includegraphics[width=6.5in]{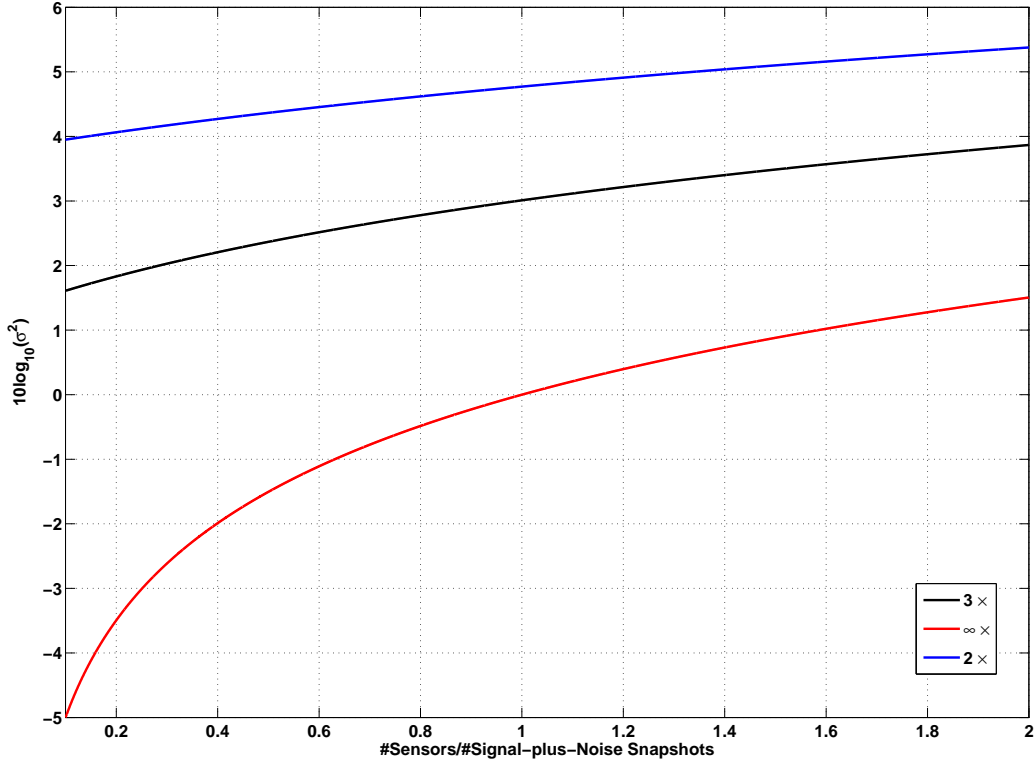}
\caption{Plot of the minimum (generalized) Eigen-SNR required (equal to $\Tau(c,c_{1})-1$ where $\Tau(c,c_1)$ is given by (\ref{eq:thresholdSNR})) to be able to asymptotically discriminate between the ``signal'' and ``noise'' eigenvalue of the matrix $\widehat{{\bf R}}_{\widehat{\Sigma}}$ constructed as in (\ref{eq:RhatSigmahat}) as a function of the ratio of the number of sensors to snapshots for different values of $1/c_{1}$ where $c_{1} \approx $ Number of sensors/Number of noise-only snapshots.  The gap between the upper two lines and the bottom most line represents the SNR loss due to noise covariance matrix estimation.}
\label{fig:eigensnr}
\end{figure}

\subsection{Implications for array processing}

Suppose there are two uncorrelated (hence, independent) signals so that ${\bf R}_{s} = \textrm{diag}(\sigma_{{\rm S}1}^{2},\sigma_{{\rm S}2}^{2})$. In (\ref{eq:superposition problem}) let ${\bf A} = [{\bf v}_{1} {\bf v}_{2}]$. In a sensor array processing application, we think of ${\bf v}_{1} \equiv {\bf v}(\theta_{1})$ and ${\bf v}_{2} \equiv {\bf v}_{2}(\theta_{2})$ as encoding the array manifold vectors for a source and an interferer with powers $\sigma_{{\rm S}1}^{2}$ and $\sigma_{{\rm S}2}^{2}$, located at $\theta_{1}$ and $\theta_{2}$, respectively. The signal-plus-noise covariance matrix is given by
\begin{equation}
{\bf R} = \sigma_{{\rm S}1}^{2} {\bf v}_{1}{\bf v}_{1}'+ \sigma_{{\rm S}2}^{2}  {\bf v}_{2}{\bf v}_{2}' + \bm{\Sigma}
\end{equation}
where $\bm{\Sigma}$ is the noise-only covariance matrix. The matrix ${\bf R}_{\Sigma}$ defined in (\ref{eq:Rsigma}) can be decomposed as
\[
{\bf R}_{\Sigma} = \bm{\Sigma}^{-1}{\bf R} =  \sigma_{{\rm S}1}^{2} \bm{\Sigma}^{-1}{\bf v}_{1}{\bf v}_{1}'+ \bm{\Sigma}^{-1}\sigma_{{\rm S}2}^{2}  {\bf v}_{2}{\bf v}_{2}' + {\bf I}
\]
so we that we can readily note that ${\bf R}_{\Sigma}$ has the $n-2$ smallest eigenvalues $\lambda_{3} = \ldots = \lambda_{n} = 1$ and the two largest eigenvalues
\begin{subequations}\label{eq:ev 2 sources}
\begin{equation}
\lambda_{1} =
1+ \dfrac{\left(\sigma_{{\rm S}1}^{2} \parallel \! {\bf u}_{1} \!\parallel^{2}+\sigma_{{\rm S}2}^{2} \parallel \! {\bf u}_{2} \!\parallel^{2}\right)}{2} + \dfrac{
\sqrt{\left(\sigma_{{\rm S}1}^{2} \parallel \! {\bf u}_{1} \!\parallel^{2}-\sigma_{{\rm S}2}^{2} \parallel \! {\bf u}_{2} \!\parallel^{2}\right)^{2}+4\sigma_{{\rm S}1}^{2}\sigma_{{\rm S}2}^{2} |\langle {\bf u}_{1}, {\bf u}_{2} \rangle| ^{2}}}{2}
\end{equation}
\begin  {equation}
\lambda_{2} =  1+ \dfrac{\left(\sigma_{{\rm S}1}^{2} \parallel \! {\bf u}_{1} \!\parallel^{2}+\sigma_{{\rm S}2}^{2} \parallel \! {\bf u}_{2} \!\parallel^{2}\right)}{2} -\dfrac{
\sqrt{\left(\sigma_{{\rm S}1}^{2} \parallel \! {\bf u}_{1} \!\parallel^{2}-\sigma_{{\rm S}2}^{2} \parallel \! {\bf u}_{2} \!\parallel^{2}\right)^{2}+4\sigma_{{\rm S}1}^{2}\sigma_{{\rm S}2}^{2} |\langle {\bf u}_{1}, {\bf u}_{2} \rangle| ^{2}}}{2}
\end{equation}
\end{subequations}
respectively, where ${\bf u}_{1} := \bm{\Sigma}^{-1/2}{\bf v}_{1}$ and ${\bf u}_{2} := \bm{\Sigma}^{-1/2}{\bf v}_{2}$ . Applying the result in Theorem \ref{prop:spiked convergence} allows us to express the effective number of signals as
\begin{equation}\label{eq:array tradeoff}
k_{{\rm eff}} =
\begin{cases}
2 &\qquad\textrm{if    } \phantom{~~~~} \Tau\left(\frac{n}{m},\frac{n}{N}\right) < \lambda_{2}\\
  &\\
1 &\qquad\textrm{if    } \phantom{~~~~}\lambda_{2} \leq \Tau\left(\frac{n}{m},\frac{n}{N}\right) < \lambda_{1}\\
   &\\
0 &\qquad\textrm{if    } \phantom{~~~~}\lambda_{1} \leq \Tau\left(\frac{n}{m},\frac{n}{N}\right). \\
\end{cases}
\end{equation}

Equation (\ref{eq:array tradeoff}) captures the tradeoff between the identifiability of two closely spaced signals, the dimensionality of the system, the number of available snapshots and the cosine of the angle between the vectors ${\bf v}_{1}$ and ${\bf v}_{2}$. Note that since the effective number of signals depends on the structure of the theoretical signal and noise covariance matrices (via the eigenvalues of ${\bf R}_{\Sigma}$), different assumed noise covariance structures (AR(1) versus white noise, for example) will impact the signal level SNR needed for reliable detection in different ways.

\subsection{Other applications}
There is interest in detecting abrupt change in a system based on stochastic observations of the system using a network of sensors. When the observations made at various sensors can be modeled as Gauss-Markov random field (GMRF), as in \cite{anandkumar09a,sung06a}, then the conditional independence property of GMRF's \cite{rue2005gmr} is a useful assumption. The assumption states that conditioned on a particular hypothesis, the observations at sensors are independent. This assumption results in the precision matrix, \ie, the inverse of the covariance matrix, having a sparse structure with many entries identically equal to zero.

Our results might be used to provide insight into the types of systemic changes, reflected in the structure of the signal-plus-noise covariance matrix, that are undetectable using sample generalized eigenvalue based estimators. Specifically, the fact that the inverse of the noise-only covariance matrix will have a sparse structure means that one can experiment with different (assumed) conditional independence structures and determine how ``abrupt'' the system change would have to be in order to be reliably detected using finite samples.

Spectral methods are popular in machine learning applications  such as unsupervised learning, image segmentation, and information retrieval \cite{vempala2007sal}. Generalized eigenvalue based techniques for clustering have been investigated in \cite{mangasarian2006mps,guarracino2007cmb}. Our results might provide insight when spectral clustering algorithms are likely to fail. In particular, we note that the results of Theorem \ref{prop:spiked convergence} hold even in  situations where the data is not Gaussian (see Theorem \ref{th:jack 4}) as is commonly assumed in machine learning applications.

\section{An algorithm for reliable detection of signals in noise}\label{sec:new algorithm}

In \cite{johnstone08a}, Johnstone proves that in the signal-free case, the distribution of the largest eigenvalue of $\widehat{{\bf R}}_{\widehat{\bm{\Sigma}}}$, on appropriate centering and scaling, can be approximated to order $O(n^{-2/3})$ by the Tracy-Widom law \cite{TracyW94,TracyW96, johnstone07a}. In the setting where there are signals present, we expect, after appropriate centering and scaling, the distribution of the signal eigenvalues of $\widehat{{\bf R}}_{\widehat{\bm{\Sigma}}}$ above the detectability threshold will obey a Gaussian law whereas those below the detectability threshold will obey the Tracy-Widom law as in the signal-free case. An analogous results for the signal bearing eigenvalues of $\widehat{{\bf R}}_{\bm{\Sigma}}$ was proved by Baik et al \cite{BaikBP04} and El Karoui \cite{ElKaroui07a}. Numerical investigations for (see Figure \ref{fig:cdf comparison}) corroborate the accuracy of our asymptotic predictions and form the basis of Algorithm 1 presented below for estimating the number of signals at (asymptotic) significance level $\alpha$. Theoretical support for this observation remains incomplete.

\begin{table*}[h]
\centering
\begin{tabular}{l}
\hline
{\sf Algorithm 1} \\
\hline
Input: Eigenvalues $\widehat{\lambda}_{j}$ for $j = 1, \ldots, n$ of $\widehat{{\bf R}}_{\widehat{\bm{\Sigma}}}$\\
1. Initialization: Set significance level $\alpha \in (0,1)$ \\
2. Compute $\tau_{\alpha} := TW_{\{\mathbb{R},\mathbb{C}\}}^{-1}(1-\alpha)$ from Table \ref{tab:tw quantiles}\\
3. Set \textcolor{red}{k} = 0 \\
4. Compute $\mu_{\{\mathbb{R},\mathbb{C}\}}[n-\textcolor{red}{k},m]$ and $\sigma_{\{\mathbb{R},\mathbb{C}\}}[n-\textcolor{red}{k},m]$ from Table \ref{tab:alg jac}\\
5. Is $\dfrac{\log m \hat{\lambda}_{\textcolor{red}{k+1}}/N - \mu_{\{\mathbb{R},\mathbb{C}\}}[n-\textcolor{red}{k},m-\textcolor{red}{k},N]}{\sigma_{\{\mathbb{R},\mathbb{C}\}}[n-\textcolor{red}{k},m-\textcolor{red}{k},N]} \geq \tau_{\alpha}$?\\
6. If yes, then go to step 9\\
7. Otherwise, increment $\textcolor{red}{k}$. \\
8. If $\textcolor{red}{k} < \min(n,m)$, go to step 3. Else go to step 9.\\
9. Return $\widehat{k} = \textcolor{red}{k}$ \\
\hline
\end{tabular}
\end{table*}

Figure \ref{fig:threshold experiment} illustrates the accuracy of the predicted statistical limit and the ability of the proposed algorithm to reliably detect the presence of the signal at this limit.

\begin{figure}[t]
\centering
\subfigure[Here $\sigma^{2} = 0.5$, so that $\lambda_{1} = 1+\sigma^{2}= 1.5 < \Tau({320}{160},{320}{960}) = 3.4365$.]{
\includegraphics[width=5.5in]{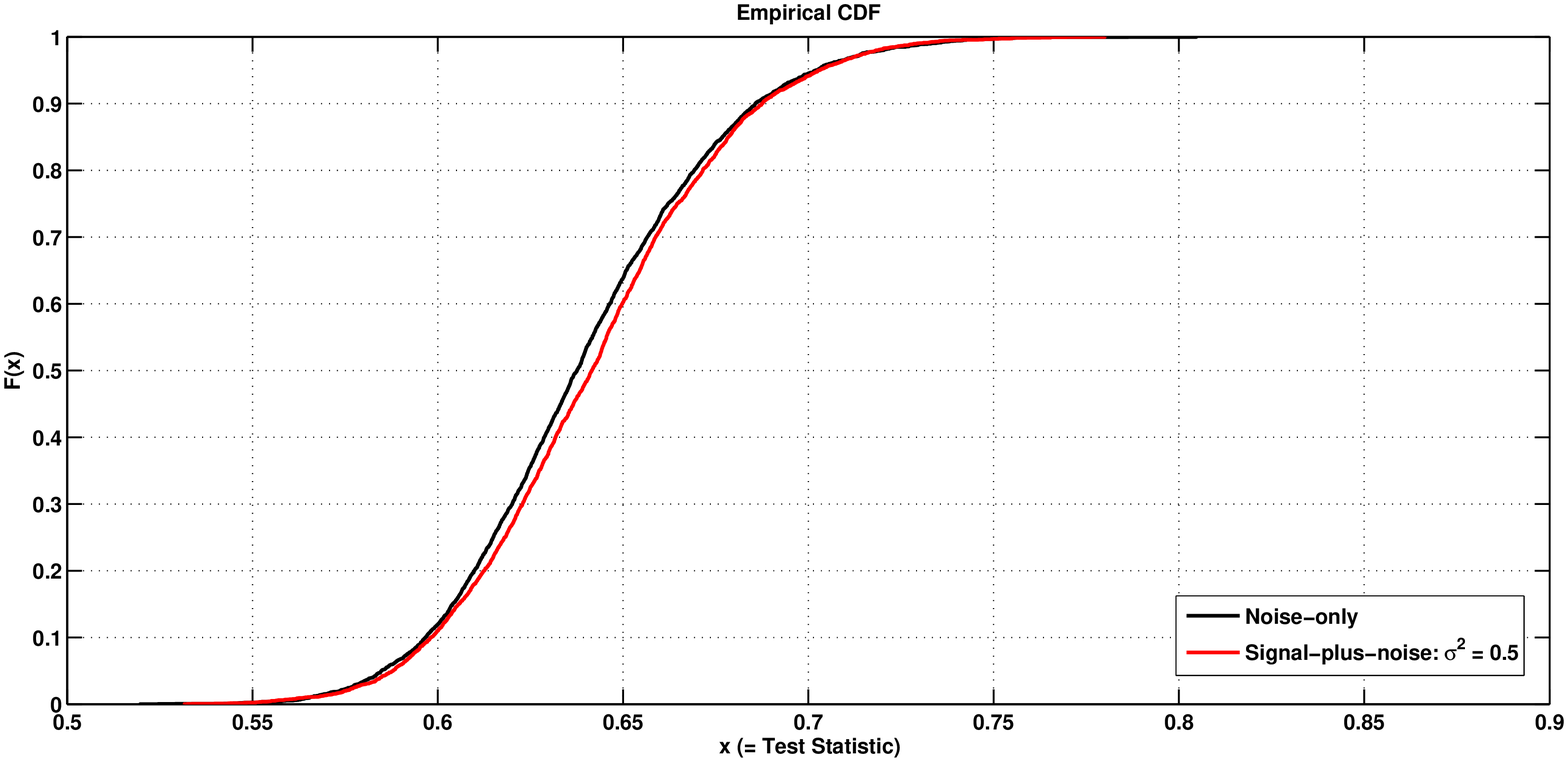}
\label{fig:cdf1}
}\\[0.75in]
\subfigure[Here $\sigma^{2} = 5$, so that $\lambda_{1} = 1+\sigma^{2} = 6 > \Tau(\frac{320}{160},\frac{320}{960}) = 3.4365$]{
\includegraphics[width=5.5in]{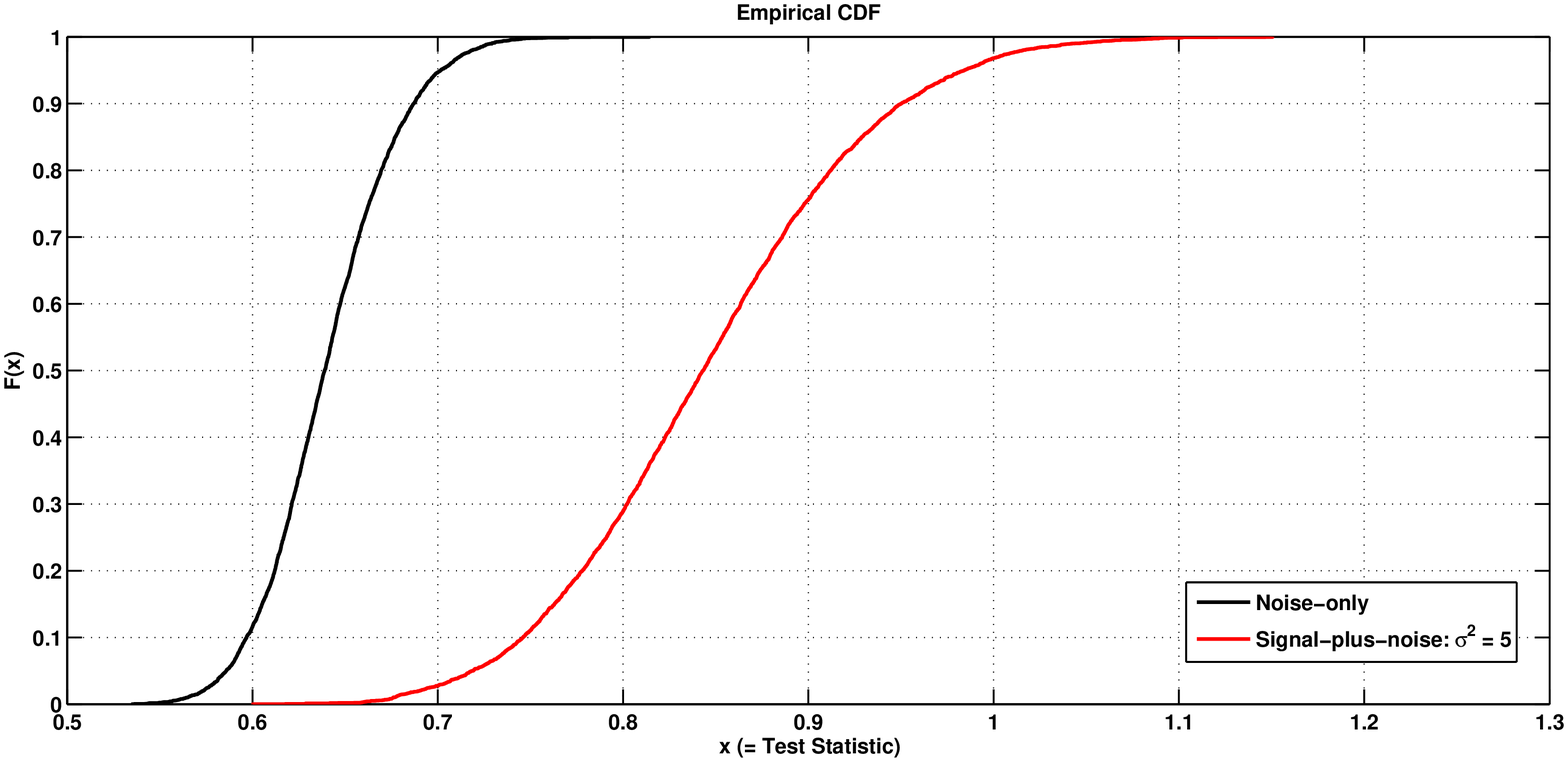}
\label{fig:cdf2}
}
\caption{In (a), for the setting described in Theorem \ref{prop:spiked convergence} we set $n = 320$, $m = 160$, $N = 960$, $\sigma^{2} =0.5$, and w.l.o.g. $\bm{\Sigma} = {\bf I}$, ${\bf R} = \textrm{diag}(\lambda_{1} = 1+ \sigma^{2},1,\ldots, 1)$ and compare the  the empirical cdf of the largest eigenvalue of $\widehat{{\bf R}}_{\widehat{\bm{\Sigma}}}$ with the largest eigenvalue of $\widehat{{\bf R}}_{\widehat{\bm{\Sigma}}}$ with ${\bf R} = {\bf I}$, \ie, in the noise-only case,  over $1000$ Monte-Carlo trials. In (b), we plot the empirical cdf but now with $\sigma^{2} = 5$.}
\label{fig:cdf comparison}
\end{figure}

\begin{sidewaystable}
\subtable[Algorithm 1]{
\label{tab:alg jac}
\includegraphics[width = 8.5in]{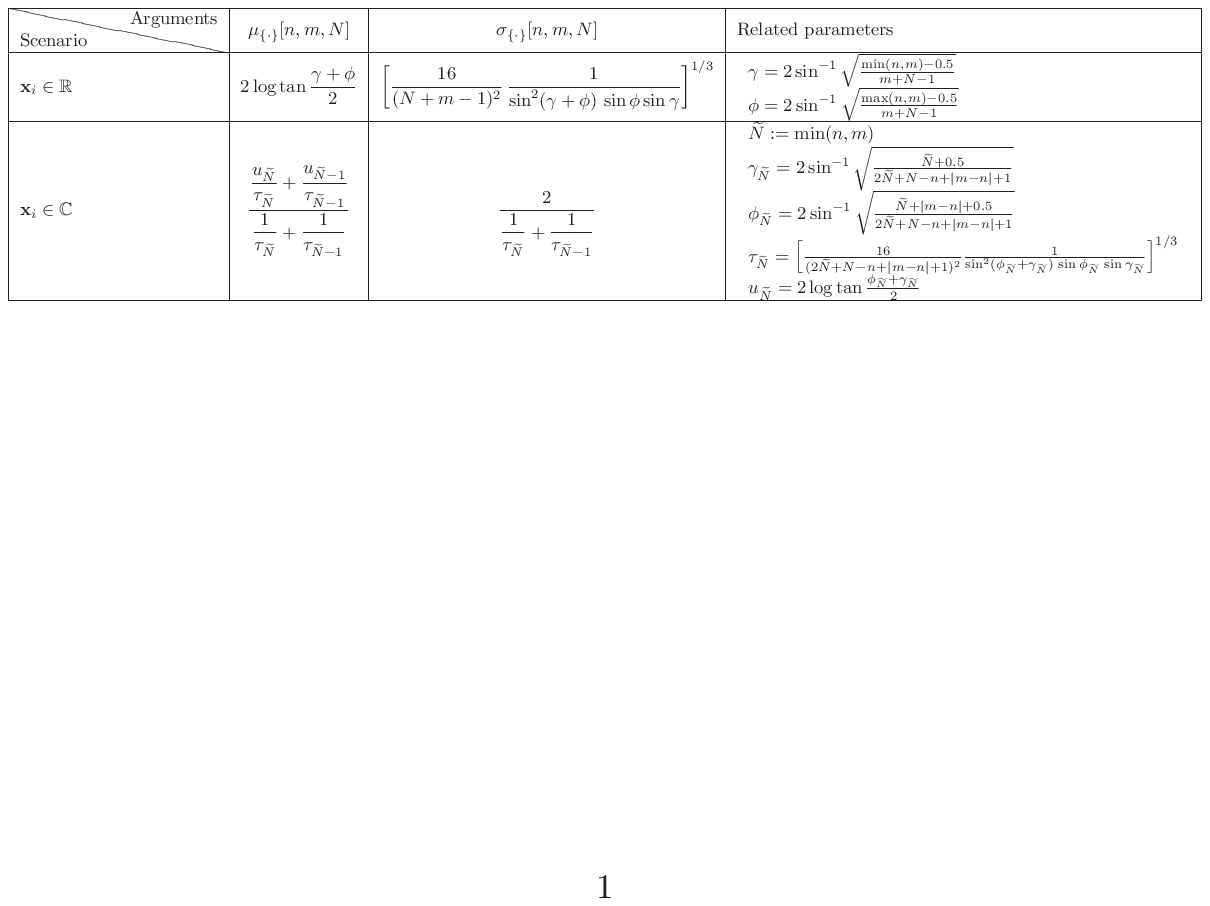}
}\\[0.1in]
\subtable[Algorithm 2]{\label{tab:alg wish}
\includegraphics[width = 8.55in]{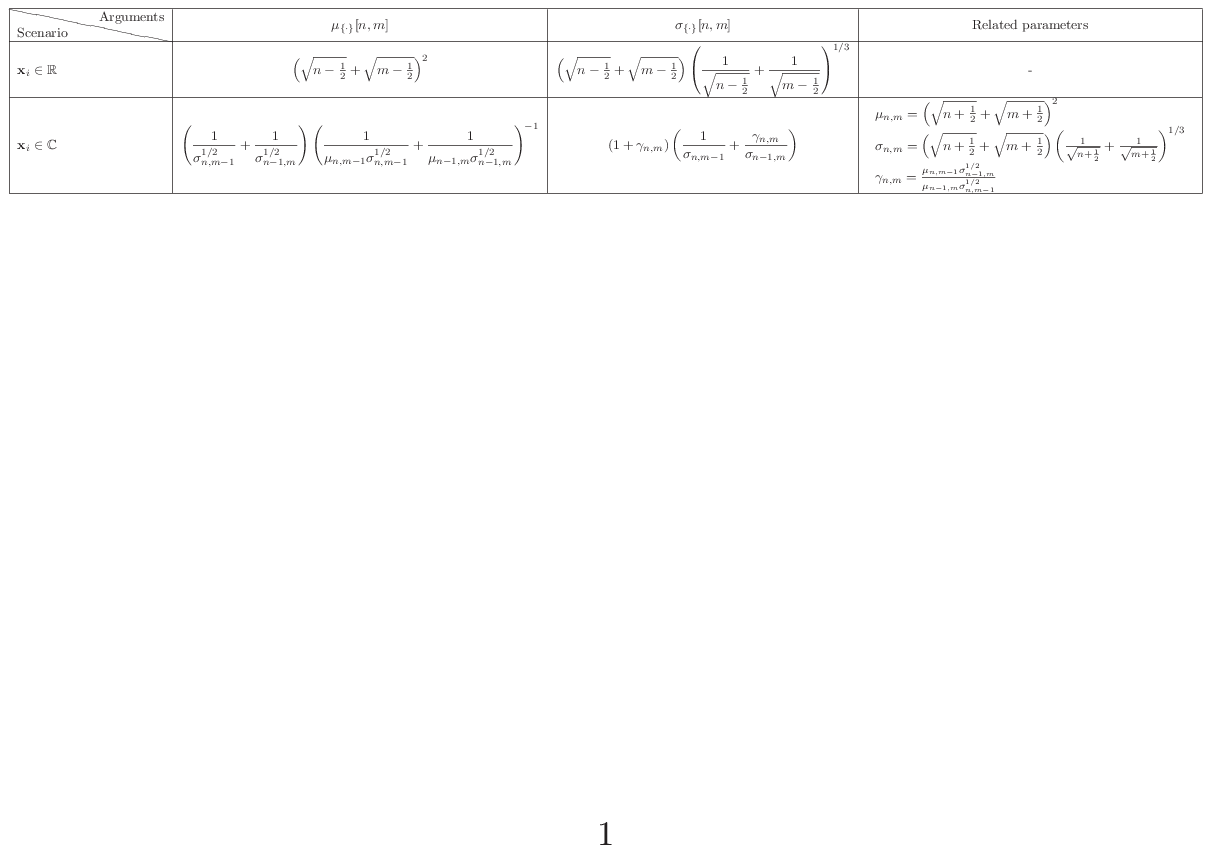}
}
\caption{Parameters for signal detection algorithms.}
\label{tab:alg parameters}
\end{sidewaystable}

\begin{table}
\centering
\begin{tabular}{|c|c|c|c|}
\hline
$\alpha$ &$1-\alpha$ & $TW^{-1}_{\mathbb{R}}(1-\alpha)$ &$TW^{-1}_{\mathbb{C}}(1-\alpha)$ \\
\hline
0.990000  &0.010000  &-3.89543267306429  &-3.72444594640057\\
0.950000  &0.050000  &-3.18037997693774  &-3.19416673215810\\
0.900000  &0.100000  &-2.78242790569530  &-2.90135093847591\\
0.700000  &0.300000  &-1.91037974619926  &-2.26618203984916\\
0.500000  &0.500000  &-1.26857461658107  &-1.80491240893658\\
0.300000  &0.700000  &-0.59228719101613  &-1.32485955606020\\
0.100000  &0.900000  &0.45014328905825   &-0.59685129711735\\
0.050000  &0.950000  &0.97931605346955   &-0.23247446976400\\
0.010000  &0.990000  &2.02344928138015   &0.47763604739084\\
0.001000  &0.999000  &3.27219605900193   &1.31441948008634\\
0.000100  &0.999900  &4.35942034391365   &2.03469175457082\\
0.000010  &0.999990  &5.34429594047426   &2.68220732168978\\
0.000001 &0.999999  &6.25635442969338   &3.27858828203370\\
\hline
\end{tabular}
\caption{The third and fourth column show the percentiles of the Tracy-Widom real  and complex distribution respectively corresponding to fractions in the second column. The percentiles were computed in \matlab  using software provided by Folkmar Bornemann for the efficient evaluation of  the real and complex Tracy-Widom distribution functions $F^{TW}_{\{\mathbb{R},\mathbb{C}\}}(x)$. The percentiles are computed using the \texttt{fzero} command in \matlab. The accuracy of the computed percentiles is about $\pm 5 \times 10^{-15}$ in absolute error terms.}
\label{tab:tw quantiles}
\end{table}

In the special setting where the noise covariance matrix is known apriori, the results of Baik et al \cite{BaikBP04}, El Karoui \cite{ElKaroui07a} and Ma \cite{ma08a} form the basis of Algorithm 2 presented below for estimating the number of signals at (asymptotic) significance level $\alpha$.

\begin{table*}[hp]
\centering
\begin{tabular}{l}
\hline
{\sf Algorithm 2}\\
\hline
Input: Eigenvalues $\widehat{\lambda}_{j}$ for $j = 1, \ldots, n$ of $\widehat{{\bf R}}_{\bm{\Sigma}}$\\
1. Initialization: Set significance level $\alpha \in (0,1)$ \\
2. Compute $\tau_{\alpha} := TW_{\{\mathbb{R},\mathbb{C}\}}^{-1}(1-\alpha)$ from Table \ref{tab:tw quantiles}\\
3. Set \textcolor{red}{k} = 0 \\
4. Compute $\mu_{\{\mathbb{R},\mathbb{C}\}}[n-\textcolor{red}{k},m]$ and $\sigma_{\{\mathbb{R},\mathbb{C}\}}[n-\textcolor{red}{k},m]$ from Table \ref{tab:alg wish}\\
5. Is $\dfrac{m \hat{\lambda}_{\textcolor{red}{k+1}} - \mu_{\{\mathbb{R},\mathbb{C}\}}[n-\textcolor{red}{k},m]}{\sigma_{\{\mathbb{R},\mathbb{C}\}}[n-\textcolor{red}{k},m]} \geq \tau_{\alpha}$?\\
6. If yes, then go to step 9\\
7. Otherwise, increment $\textcolor{red}{k}$. \\
8. If $\textcolor{red}{k} < \min(n,m)$, go to step 3. Else go to step 9.\\
9. Return $\widehat{k} = \textcolor{red}{k}$ \\
\hline
\end{tabular}
\end{table*}

\begin{figure}[t]
\centering
\includegraphics[width=6.5in]{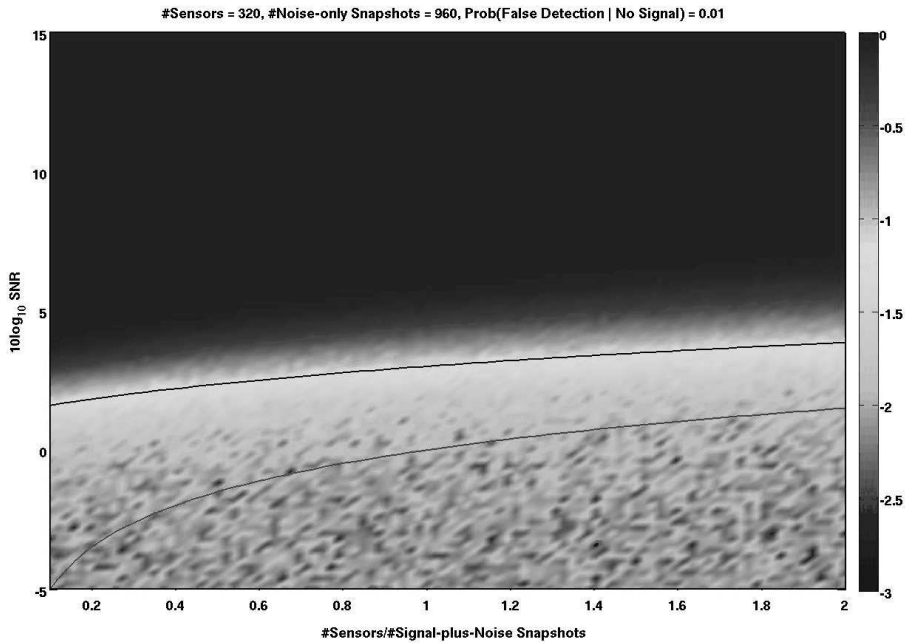}
\caption{A heat map of the log probability of signal detection using Algorithm 1 in Section \ref{sec:new algorithm}, with the significance level $\alpha$ set at $0.01$, in (eigen) SNR versus number of sensors to number of signal-plus-noise snapshots phase space. In this example, for the setting described in Theorem \ref{prop:spiked convergence} we set $n = 320$, $N = 960$ and w.l.o.g. $\bm{\Sigma} = {\bf I}$, ${\bf R} = \textrm{diag}(\lambda_{1} = 1+ \textrm{SNR},1,\ldots, 1)$ and evaluated Prob($\hat{k} = 1$) over $1000$ Monte-Carlo trials and a grid of $100$ equally spaced points in the -5 dB to 15 dB (eigen) SNR range and $100$ equally spaced points in the $c_{1} = n/m$ space by setting $m = n/c_{1}$. The values of the colormap at each of the $100 \times 1000$ faces were interpolated across each line segment and face to obtain the above plot. In the dark zone (upper half of the plot) a signal can be reliably detected  whereas in the lighter zone (lower half of the plot) the signal is statistically indistinguishable from noise as evidenced from the probability of detection being close to the significance level. The superimposed solid black line demarcates the theoretically predicted threshold  while the superimposed solid red line is the theoretically predicted threshold in the setting where the noise covariance matrix is perfectly known. The gap between the two lines thus represents the SNR loss due to noise covariance matrix estimation.}
\label{fig:threshold experiment}
\end{figure}

\section{Conclusion}\label{sec:conclusion}
Figure \ref{fig:threshold experiment} captures the fundamental statistical limit encountered when attempting to discriminate signal from noise using finite samples. Simply put, a signal whose eigen-SNR is below the detectability threshold cannot be reliably detected while a signal above the threshold can be. In settings such as wireless communications and biomedical signal processing where the signal power is controllable, our results provide a prescription for how strong it needs to be so that it can be detected. If the signal level is barely above the threshold, simply adding more sensors might actually degrade the performance because of the increased dimensionality of the system. If, however, either due to clever signal design or physics based modeling, we are able to reduce (or identify) the dimensionality of the subspace spanned by signal, then according to Figure \ref{fig:threshold experiment} the detectability threshold will also be lowered. With VLSI advances making sensors easier and cheaper to deploy, our results demonstrate exactly why the resulting gains in systemic performance will more than offset the effort we will have to invest in developing increasingly more sophisticated dimensionality reduction techniques. Understanding the fundamental statistical limits of techniques for signal detection in the setting where the noise-only sample covariance matrix is singular remains an important open problem.

\section*{Acknowledgements}
Raj Rao was supported by an Office of Naval Research Post-Doctoral Fellowship Award under grant N00014-07-1-0269. Jack Silverstein was supported by the U.S. Army Research Office under Grant W911NF-05-1-0244. R. R. thanks Arthur Baggeroer for encouragement and invaluable feedback. This material was based upon work supported by the National Science Foundation under Agreement No. DMS-0112069. Any opinions, findings, and conclusions or recommendations expressed in this material are those of the author(s) and do not necessarily reflect the views of the National Science Foundation.

We thank Alan Edelman for his encouragement and support, Interactive Supercomputing, Inc. for providing access to the Star-P parallel computing software and Sudarshan Raghunathan of Interactive Supercomputing, Inc. for his patience and support in answering our multitude of Star-P programming queries. We remain grateful to Al Davis and Chris Hill of MIT for granting us access to the Darwin Project computing cluster. Thanks to their involvement we were able to program, debug and complete the computation needed to produce Figure \ref{fig:threshold experiment} in 4 days! Without their gracious help, the computation would have taken 3 months on the latest single processor laptop. We thank Folkmar Bornemann for providing the \matlab code for computing the percentiles in Table \ref{tab:tw quantiles}.

\section{Appendix}\label{sec:appendix}

\subsection{Mathematical preliminaries}
Let for $i,j=1,2,\ldots$, $X_{ij}$ be a collection of complex valued i.i.d. random variables with $\exp X_{1\,1}=0$ and $\exp|X_{1\,1}|^2=1$. For positive integers $n$ and $m$ let ${\bf X}_n=(X_{ij})$, $i=1,2,\ldots,n$, $j=1,2,\ldots,m$.  Assume for each $n$  ${\bf T}_n$ is an $n\times n$ Hermitian nonnegative definite matrix.  The matrix $${\bf B}_n\equiv(1/m)\Th_n {\bf X}_n{\bf X}_n^*\Th_n,$$ where $\Th_n$ is any Hermitian square root of ${\bf T}_n$, can be viewed as a sample covariance matrix, formed from $m$ samples of the random vector $\Th_n {\bf X}_{\cdot 1}$ with ${\bf X}_{\cdot 1}$ denoting the first column of ${\bf X}_n$, which has ${\bf T}_n$ for its population covariance matrix.
When $n$ and $m$ are both large and on the same order of magnitude, ${\bf B}_n$ will not be near ${\bf T}_n$, due to an insufficient number of samples required for such a large dimensional random vector. However, there exist results on the eigenvalues of ${\bf B}_n$. They are limit theorems as $n\to\infty$ with $m=m(n)$ and $c_n\equiv n/m\to c$, which provide information on the eigenvalues of ${\bf T}_n$.  One result \cite{silverstein95b} is on the {\sl empirical distribution function} (e.d.f.),  $F^{B_n}$, of the eigenvalues of ${\bf B}_n$, which throughout the paper, is defined for any Hermitian $n\times n$ matrix ${\bf A}$ as
$$F^A(x)\equiv (1/n)(\text{number of eigenvalues of A $\leq x$)}.$$

The limit theorem is expressed in terms of the {\sl Stieltjes transform} of the limiting e.d.f. of the $F^{B_n}$'s, where for any distribution
function (d.f.) $G$ its Stieltjes transform, $m_G$, is defined to be
$$m_G(z)=\int\frac1{\lambda-z}dG(\lambda),\quad z\in\comp^+\equiv \{z\in\comp:\Im z>0\}.$$

There exists a one-to-one correspondence between the distribution functions (d.f.'s) and their Stieltjes transforms, due to the inversion formula
$$G(b)-G(a)=\lim_{v\to0}\int_a^b\Im m_G(x+iv)dx,$$
for $a,b$ continuity points of $G$.

The limit theorem allows the ${\bf T}_n$ to be random, only assuming as $n\to\infty$, the convergence of $F^{T_n}$ to a nonrandom proper probability distribution function $H_n$, \ie,  $H_n\equiv F^{T_n}\asarrow H$. The theorem states that with probability one, as $n\to\infty$, $F^{B_n}\darrow F$, where $F$ is nonrandom, with Stieltjes transform $m=m_F(z)$, $z\in\comp^+$ satisfying the equation
\begin{equation}\label{eq:limit theorem 1}
m=\int\frac1{t(1-c-czm)-z}dH(t),
\end{equation}
which is unique in the set $\{m\in\comp:-\frac{1-c}z+cm\in\comp^+\}$.

It is more convenient to work with the eigenvalues of the $m\times m$ matrix $(1/m){\bf X}_n^{'}{\bf T}_n{\bf X}_n$, whose eigenvalues differ from those of ${\bf B}_n$ by $|n-m|$ zero eigenvalues.  Indeed, with $I_A$ denoting the indicator function on the set $A$ we have the exact relationship
$$F^{(1/m)X_n^*T_nX_n}(x)=(1-c_n))I_{[0,\infty)}(x)+c_nF^{B_n}(x)$$
$$\darrow =(1-c))I_{[0,\infty)}(x)+cF(x)\equiv F^{c,H}(x)$$
almost surely, implying
\begin{equation}\label{eq:identity above 2}
m_{F^{c,H}}(z)=-(1-c)/z+cm_F(z).
\end{equation}

Upon substituting $m_{F^{c,H}}$ into (\ref{eq:limit theorem 1}) we find that for $z\in\comp^+$
$m=m_{F^{c,H}}(z)$ solves the equation
\begin{equation}\label{eq:jack 2}
z=-\frac1{m}+c\int\frac{\lambda}{1+\lambda m}dH(\lambda),
\end{equation}
and is unique in $\comp^+$.  Thus we have an explicit inverse for
$m_{F^{c,H}}$.

Qualitative properties of $F^{c,H}$ have been obtained in \cite{silverstein95a}, most notably the fact that on $(0,\infty)$ $F^{c,H}$ has a continuous derivative.  The paper \cite{silverstein95a} also shows how intervals outside
the support of $F^{c,H}$ can be determined from the graph of (\ref{eq:jack 2}) for $m\in\real$.

Let $S_G$ denote the support of the d.f. $G$, $S'_G$ its complement, and define $x_{c,H}=x_{c,H}(m)$ to be (\ref{eq:jack 2}) with $m\in\real$.  Intuitively, on $S'_{F^{c,H}}$ $m_{F^{c,H}}$ is well defined and increasing.  Therefore it is invertible on each interval in $S'_{F^{c,H}}$, its inverse, namely $x_{c,H}$, is also increasing.   The details are stated in the following.

\begin{lemma}[Theorems 4.1, 4.2 of \cite{silverstein95a}]\label{lem:jack 1}
 If $x\in S'_{F^{c,H}}$, then $m=m_{F^{c,H}}$ satisfies (1)
$m\in\real\backslash\{0\}$, (2) $-m^{-1}\in S'_H$, and (3)
$\frac{d}{dm}x_{c,H}(m)>0$.
Conversely, if $m$ satisfies (1)--(3), then $x=x_{c,H}(m)\in S'_{F^{c,H}}$.
\end{lemma}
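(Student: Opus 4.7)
The plan is to exploit the general Stieltjes transform framework. Since $F^{c,H}$ is a probability distribution, $m_{F^{c,H}}(z)$ extends analytically from $\mathbb{C}^+$ across every point of $S'_{F^{c,H}}$, and the extension is real-valued there. Moreover, on any open interval $I \subset S'_{F^{c,H}}$ the extension $y\mapsto m_{F^{c,H}}(y)$ is strictly increasing because $\frac{d}{dy} m_{F^{c,H}}(y) = \int (\lambda - y)^{-2}\,dF^{c,H}(\lambda) > 0$. The identity (\ref{eq:jack 2}), originally derived on $\mathbb{C}^+$, extends to $I$ by analytic continuation, so $x = x_{c,H}(m_{F^{c,H}}(x))$ holds on $I$ with $x_{c,H}$ serving as a local inverse of $m_{F^{c,H}}$.

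For the forward direction, suppose $x \in S'_{F^{c,H}}$ and set $m := m_{F^{c,H}}(x)$. Condition (3) follows by differentiating the inverse relation, which gives $\frac{d}{dm} x_{c,H}(m) = 1/\frac{d}{dy}m_{F^{c,H}}(y)\big|_{y=x} > 0$. For (1), if $m=0$ then the $-1/m$ term in (\ref{eq:jack 2}) blows up while the remaining integral stays bounded, contradicting the finiteness of $x$. For (2), if $-1/m \in S_H$ then taking imaginary parts of (\ref{eq:jack 2}) as $z = x + i\eta \to x$ from $\mathbb{C}^+$ picks up a nonzero contribution from the boundary value $m_H(-1/m + i0^+)$, contradicting that the limit of $\Im z$ must be zero while $\Im m \downarrow 0$.

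For the converse, given $m$ satisfying (1)--(3), set $x := x_{c,H}(m)$. Condition (2) makes the integrand in (\ref{eq:jack 2}) smooth at $\lambda = -1/m$, so $x_{c,H}$ is real analytic in a neighborhood of $m$; condition (3) makes it locally invertible, yielding a real-analytic branch $\tilde m(\cdot)$ on a neighborhood $J$ of $x$. The crux is to show that $\tilde m(y) = \lim_{\eta\downarrow 0} m_{F^{c,H}}(y+i\eta)$ for every $y \in J$, which by the Stieltjes inversion formula forces $F^{c,H}$ to assign zero mass to $J$ and hence $x \in S'_{F^{c,H}}$. The matching is supplied by the uniqueness of $m_{F^{c,H}}(z)$ within the set $\{m \in \mathbb{C}: -(1-c)/z + cm \in \mathbb{C}^+\}$ recalled below (\ref{eq:limit theorem 1}): as $\eta \downarrow 0$, this unique solution must converge to some real root of (\ref{eq:jack 2}), and the only candidate consistent with continuity of the Stieltjes transform is the branch $\tilde m(y)$.

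The main obstacle will be precisely this converse matching --- guaranteeing that the analytic solution on $\mathbb{C}^+$ descends to the particular real branch supplied by $x_{c,H}^{-1}$ rather than to some spurious real root of (\ref{eq:jack 2}) produced by its multivalued inverse. Conditions (1) and (2) isolate a regular region for $x_{c,H}$, while (3) is the decisive ingredient: it excludes turning points where distinct real branches coalesce and the Stieltjes transform could acquire a nonzero imaginary part upon descending from $\mathbb{C}^+$. Making this quantitative, by tracking $\Im m_{F^{c,H}}(y+i\eta)$ along a homotopy of paths from $\mathbb{C}^+$ down to the real branch $\tilde m(y)$ and verifying that the imaginary part never detaches from zero, is where the real work lies.
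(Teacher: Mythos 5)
The paper does not prove this lemma---it is stated with the bracketed attribution ``[Theorems 4.1, 4.2 of \cite{silverstein95a}]'' and no proof environment follows, so there is no in-paper proof for me to compare against. I will therefore assess your sketch on its own merits.

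Your outline is aligned with the strategy Silverstein and Choi actually use: extend $m_{F^{c,H}}$ analytically from $\comp^+$ across the gap, note it is real and strictly increasing there, invoke the inverse-function theorem to obtain (3), read off (1) from the $-1/m$ pole of $x_{c,H}$, argue (2) by controlling the boundary value, and prove the converse by showing the analytic branch on $\comp^+$ descends onto the local real inverse of $x_{c,H}$ and then applying Stieltjes inversion. The forward direction, as you present it, is essentially complete modulo one point of looseness: for (2) you assert that if $-1/m\in S_H$ one ``picks up a nonzero contribution,'' but a careful proof must handle separately the cases where $-1/m$ is an atom of $H$ (where the integral in (\ref{eq:jack 2}) diverges outright, so the identity cannot hold) and where $H$ has only continuous mass there (where $\Im\int\frac{\lambda}{1+\lambda m}\,dH(\lambda)$ stays strictly positive as one descends to the real axis, contradicting reality of $x$); your phrasing glosses over the distinction.

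The genuine gap is in the converse, and you are candid about it: the entire content of Theorem~4.2 of \cite{silverstein95a} is the ``matching'' step you defer. Conditions (1)--(3) give you a well-defined real-analytic local inverse $\tilde m$ of $x_{c,H}$ near $x$, but by themselves they do not tell you that $\lim_{\eta\downarrow 0}m_{F^{c,H}}(y+i\eta)=\tilde m(y)$ rather than some other solution of (\ref{eq:jack 2}); without that identification you cannot invoke Stieltjes inversion to conclude $F^{c,H}$ has no mass near $x$. The homotopy/continuity argument you gesture at is plausible but not carried out, and it is precisely the part that requires work: one must rule out the possibility that $\Im m_{F^{c,H}}$ remains bounded away from zero as $z\to y$, which in Silverstein--Choi is handled by a careful analysis of the sign and monotonicity structure of $x_{c,H}$ together with the uniqueness of the solution in the region $\{-(1-c)/z+cm\in\comp^+\}$. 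So the proposal is a correct high-level reconstruction of the standard proof, but as written it establishes only the forward implication; the converse remains a program rather than a proof.
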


In simple terms $S'_{F^{c,H}}$ is comprised of the range of values where $x_{c,H}$ is increasing.

Another result which will be needed later is the following.

\begin{lemma}[Theorem 4.3 of \cite{silverstein95a}]
\label{lem:jack 2}
Suppose each $m$ contained in the interval $[m_1,m_2]$ satisfies (1) and (2) of Lemma \ref{lem:jack 1}, and $\frac{d}{dm}x_{c,H}(m_i)\ge0$ for $i=1,2$.  Then $\frac{d}{dm}x_{c,H}(m)>0$
for all $m\in(m_1,m_2)$.
\end{lemma}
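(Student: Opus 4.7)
The plan is to recast the claim via the auxiliary function $\psi(m) := c \int (m\lambda/(1+m\lambda))^2\, dH(\lambda)$, since direct differentiation gives $\frac{d}{dm}x_{c,H}(m) = (1-\psi(m))/m^2$; the hypothesis then reads $\psi(m_i) \le 1$ and the goal is $\psi(m) < 1$ on $(m_1,m_2)$. I argue by contradiction. Suppose $\psi(m_0) \ge 1$ for some interior $m_0$. Since $x_{c,H}$ is real-analytic on a neighborhood of $[m_1,m_2]$ and nonconstant, $\psi \not\equiv 1$ on any subinterval, so by continuity one can locate a maximal closed subinterval $[a,b] \subset [m_1,m_2]$ on which $\psi \ge 1$, with $\psi(a) = \psi(b) = 1$ and $\psi < 1$ immediately outside $[a,b]$. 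Equivalently, $a$ and $b$ are zeros of $x'_{c,H}$ with $x'_{c,H} \le 0$ on $[a,b]$; the sign transitions at the endpoints of $[a,b]$ force $x''_{c,H}(a) \le 0 \le x''_{c,H}(b)$.

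The algebraic core is the substitution $u(\lambda;m) := m\lambda/(1+m\lambda)$, which yields
\[
x'_{c,H}(m) = \frac{1}{m^2}\Bigl(1 - c\int u^2\, dH\Bigr), \qquad m^3\, x''_{c,H}(m) = -2 + 2c\int u^3\, dH,
\]
together with the monotonicity
\[
\frac{d}{dm}\Bigl(c\int u^3\, dH\Bigr) = 3cm^2 \int \frac{\lambda^3}{(1+m\lambda)^4}\, dH \ge 0,
\]
which holds without further hypothesis since $\lambda \ge 0$ on $S_H$. At $a, b$ one has $c\int u^2\, dH = 1$, and combining the signs of $x''$ with the sign of $m^3$ translates to: for $m<0$, $c\int u^3\, dH \ge 1$ at $a$ and $\le 1$ at $b$; for $m>0$, the reverse.

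Condition (1) fixes the sign of $m$ on $[m_1,m_2]$. In the case $m<0$ throughout, the opposing constraints at $a, b$ combined with the monotonicity above force $c\int u^3\, dH \equiv 1$ on $[a,b]$, hence $\int \lambda^3/(1+m\lambda)^4\, dH \equiv 0$ there; nonnegativity of the integrand then collapses $H$ to $\delta_0$. In the case $m>0$ throughout, the pointwise bound $u \in [0,1)$ gives $u^3 \le u^2$ on $S_H$, whence $c\int u^3\, dH \le 1$ everywhere; combining with the requirement $\ge 1$ at $b$ forces $u^3 = u^2$ a.e.\ w.r.t.\ $H$, again collapsing $H$. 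Either outcome contradicts $H$ being nondegenerate. The degenerate cases $H = \delta_{\lambda_0}$ are handled directly, since $x_{c,H}$ becomes an elementary rational function of $m$; the boundary case $a = b$ is treated similarly, using a Cauchy--Schwarz equality argument at $m = a$ where $x'_{c,H}(a) = x''_{c,H}(a) = 0$.

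The main obstacle in this plan is the $m<0$ subcase where $-1/m$ lies in an interior gap of $S_H$ with mass on both sides: there $u$ takes both negative values and values exceeding $1$ on $S_H$, no pointwise comparison between $u^2$ and $u^3$ is available, and the sign of $\psi'(m)$ at a critical point is genuinely indeterminate. The decisive ingredient that rescues the plan in this regime is the monotonicity of $m \mapsto c\int u^3\, dH$, whose derivation is insensitive to the location of $-1/m$ relative to $S_H$ and which combines with the opposing $x''$-constraints at $a$ and $b$ to deliver the conclusion uniformly across all $m<0$ cases.
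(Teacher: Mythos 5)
Your argument is correct in its main lines, but it takes a genuinely different and considerably more laborious route than the cited source. The paper itself gives no proof here (it simply cites Theorem~4.3 of Silverstein--Choi); the original argument is a one-step convexity observation: substituting $y=-1/m$ turns $\psi(m)=c\int\bigl(\tfrac{m\lambda}{1+m\lambda}\bigr)^2\,dH(\lambda)$ into $g(y)=c\int\tfrac{\lambda^2}{(\lambda-y)^2}\,dH(\lambda)$, which is convex in $y$ on the gap of $S_H$ (indeed $g''(y)=6c\int\tfrac{\lambda^2}{(\lambda-y)^4}\,dH\ge 0$, and $>0$ unless $H=\delta_0$). Since $g(y_1),g(y_2)\le1$ at the endpoints, strict convexity forces $g<1$ strictly between them, and the degenerate case $H=\delta_0$ gives $g\equiv 0$. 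That proof requires no sign split on $m$, no endpoint-curvature argument, and no monotone auxiliary functional.

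Your route instead works in the $m$-variable, detects a putative bad interval $[a,b]$ by contradiction, extracts $x''_{c,H}(a)\le 0\le x''_{c,H}(b)$, and plays these off against the monotonicity of $\phi(m)=c\int u^3\,dH$ (correctly derived), splitting on the sign of $m$. I checked the algebra: $m^3 x''_{c,H}(m)=-2+2\phi(m)$ and $\phi'(m)=3cm^2\int\lambda^3/(1+m\lambda)^4\,dH\ge 0$ are both right, and the $m<0$ case ($\phi(a)\ge 1\ge\phi(b)$ with $\phi$ nondecreasing $\Rightarrow$ $\phi'\equiv 0$ on $(a,b)$ $\Rightarrow$ $H=\delta_0$) and $m>0$ case (pointwise $u^3\le u^2$ with equality at $b$ $\Rightarrow$ $H=\delta_0$) both close. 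The one place that is under-specified is the collapsed case $a=b=m_0$ with $m_0<0$: your invocation of a ``Cauchy--Schwarz equality argument'' is not the natural tool there, since for $m<0$ the quantity $u$ is not confined to $[0,1]$ and $\int u^2(1-u)\,dH=0$ carries no sign information. The clean fix, still within your framework, is to use $\psi'(m_0)=0$ together with the local-max condition $\psi''(m_0)\le 0$: from $\psi'(m_0)=0$ one gets $\psi''(m_0)=-(2/m_0)\,\phi'(m_0)$, which is $\ge 0$ when $m_0<0$, forcing $\phi'(m_0)=0$ and hence $H=\delta_0$. With that repair your proof is complete; it is a valid alternative, at the cost of a case analysis that the convexity change of variables renders unnecessary.
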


Limiting eigenvalue mass at zero is also derived in \cite{silverstein95a}.
It is shown that
\begin{equation}\label{eq:jack 3}
F(0)=\begin{cases}
\hfill H(0),\hfill &c(1-H(0))\leq1,\\
\hfill 1-c^{-1},\hfill&c(1-H(0))>1.
\end{cases}
\end{equation}

\subsection{Support of eigenvalues}

Since the convergence in distribution of $F^{B_n}$ only addresses how proportions of eigenvalues behave, understanding the possible appearance or non-appearance of eigenvalues in $S'_{F^{c,H}}$ requires further work.

The question of the behavior of the largest and smallest eigenvalues when ${\bf T}_n={\bf I}$ has been answered by Yin, Bai, and Krishnaiah in \cite{yinbai88a}, and Bai and Yin in \cite{bai93c}, respectively, under the additional assumption $\exp|{\bf X}_{1\,1}|^4<\infty$: the largest eigenvalue and $\min(n,m)^{\text{th}}$ largest eigenvalue of
$(1/m){\bf X}_n{\bf X}_n^*$ converge a.s. to $(1+\sqrt c)^2$ and $(1-\sqrt c)^2$
respectively, matching the support, $[(1-\sqrt c)^2,(1+\sqrt c)^2]$ of $F$ on $(0,\infty)$.  More on $F$ when ${\bf T}_n={\bf I}$ will be given later.

For general ${\bf T}_n$, restricted to being bounded in spectral norm, the non-appearance of eigenvalues in $S'_{F^{c,H}}$ has been proven by Bai and Silverstein in \cite{bai98z}.
Moreover, the separation of eigenvalues across intervals in $S'_{F^{c,H}}$, mirrors exactly the separation of eigenvalues over corresponding
intervals in $S'_H$ \cite{bai99z}.  The results are summarized below.

\begin{theorem}\label{th:jack 1}
Assume additionally $\exp|{\bf X}_{1\,1}|^4<\infty$ and the ${\bf T}_n$ are nonrandom and are bounded in spectral norm for all $n$.

Let $F^{c_n,H_n}$ denote the ``limiting" e.d.f. associated with $(1/m){\bf X}_n^*{\bf T}_n{\bf X}_n$, in other words, $F^{c_n,H_n}$ is the d.f.
having Stieltjes transform with inverse (\ref{eq:jack 2}), where $c,H$ are replace by $c_n,H_n$.

Assume the following condition:
\begin{itemize}
\item{(*)} Interval $[a,b]$ with $a>0$ lies in an open interval outside the support of $F^{c_n,H_n}$ for all large $n$.
\end{itemize}
Then $\P(\text{no eigenvalue of ${\bf B}_n$ appears in $[a,b]$ for all large $n$})=1$.

For $n\times n$ Hermitian non-negative definite matrix ${\bf A}$, let $\lambda_k^A$ denote the $k^{\text{th}}$ largest eigenvalue of $A$.  For notational convenience, define  $\lambda_0^{A}=\infty$ and
 $\lambda_{n+1}^{A}=0$.

(i) If $c(1-H(0))>1$, then $x_0$, the smallest value in the support of $F^{c,H}$, is positive, and with probability 1, $\lambda_m^{B_n}\to x^0$ as $n\to\infty$.

(ii) If $c(1-H(0))\leq1$, or  $c(1-H(0))>1$ but $[a,b]$ is not contained in $[0,x_0]$, then $m_{F^{c,H}}(b)<0$, and for all $n$ large there is an index $i_n$ for which
\begin{equation}\label{eq:jack 4}
\lambda_{i_n}^{T_n}>-1/m_{F^{c,H}}(b)\quad\text{and}\quad \lambda_{i_n+1}^{T_n}<-1/m_{F^{c,H}}(a).
\end{equation}
Then $\P(\lambda_{i_n}^{B_n}>b\text{ and $\lambda_{i_n+1}^{B_n}<a$ for all large $n$})=1$.
\end{theorem}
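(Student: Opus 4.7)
\textbf{Proof plan for Theorem \ref{th:jack 1}.}

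My plan is to work entirely through the Stieltjes transform $m_n(z) := m_{F^{B_n}}(z) = (1/n)\mathrm{tr}(B_n - z I)^{-1}$ and its deterministic surrogate $m_n^{\circ}(z) := m_{F^{c_n,H_n}}(z)$ defined implicitly by (\ref{eq:jack 2}) with $c,H$ replaced by $c_n, H_n$. The objective is to establish almost sure uniform convergence $m_n(z) - m_n^{\circ}(z) \to 0$ on a closed contour $\mathcal{C}$ that encloses $[a,b]$ and lies in a region where $m_n^{\circ}$ remains analytic. Assumption (*) together with Lemma \ref{lem:jack 1} guarantees $m_n^{\circ}$ extends analytically across a neighborhood of $[a,b]$, so such a contour exists. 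Once uniform convergence on $\mathcal{C}$ is in hand, the number of eigenvalues of $B_n$ inside $[a,b]$ equals $-(n/2\pi i) \oint_{\mathcal{C}} m_n(z)\, dz$, which is zero when $m_n^{\circ}$ is substituted for $m_n$, by Cauchy's theorem. Since this count is an integer, any deterministic $o(1)$ bound on the difference forces it to vanish for all large $n$, almost surely, which is precisely the non-appearance statement.

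The a.s. bound on $m_n - m_n^{\circ}$ is assembled in three layers. First, using $\mathbb{E}|X_{11}|^4 < \infty$, I would truncate entries of $\mathbf{X}_n$ at level $n^{\delta}$ and recenter/rescale, perturbing the spectrum of $B_n$ by an a.s.\ negligible amount. Second, decompose the random fluctuation along the column-filtration of $\mathbf{X}_n$ as a martingale, $n(m_n(z) - \mathbb{E} m_n(z)) = \sum_{k=1}^{m}(\mathbb{E}_k - \mathbb{E}_{k-1})\mathrm{tr}(B_n - zI)^{-1}$; rank-one interlacing bounds each martingale difference by $2/\Im z$, and Burkholder's inequality gives $\mathbb{E}|m_n(z) - \mathbb{E} m_n(z)|^{2p} = O(n^{-p} (\Im z)^{-2p})$, from which Borel--Cantelli yields $|m_n(z) - \mathbb{E} m_n(z)| = o(n^{-1/2 + \epsilon})$ almost surely on sets bounded away from $\mathbb{R}$. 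Third, for the bias term, expand $\mathbb{E}(B_n - zI)^{-1}$ column-by-column with Sherman--Morrison and apply concentration of the quadratic forms $\mathbf{X}_{\cdot k}^{*} \mathbf{T}_n^{1/2} (B_n^{(k)} - zI)^{-1} \mathbf{T}_n^{1/2} \mathbf{X}_{\cdot k}$ around their traces; $\mathbb{E} m_n(z)$ is then shown to satisfy the fixed-point equation (\ref{eq:jack 2}) up to residual $O(n^{-1})$, giving $|\mathbb{E} m_n(z) - m_n^{\circ}(z)| = O(n^{-1})$ on compact subsets of $\mathbb{C}^+$.

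The genuinely delicate step, and the anticipated main obstacle, is pushing these bounds down to a contour whose horizontal legs sit at height $v_n \downarrow 0$ polynomially in $n$: the Cauchy argument requires $\mathcal{C}$ to hug the real axis tightly around $[a,b]$, yet both the Burkholder bound and the Sherman--Morrison expansion degrade as $\Im z \to 0$. The way out, taken in \cite{bai98z}, is to exploit condition (*) directly: because $[a,b] \subset S'_{F^{c_n,H_n}}$, Lemma \ref{lem:jack 1} guarantees that $m_n^{\circ}$ is real analytic in a neighborhood of $[a,b]$ and that $-1/m_n^{\circ}(x)$ stays outside $S_{H_n}$, which upgrades the resolvent bound on $(B_n - zI)^{-1}$ from $1/\Im z$ to $O(1)$ provided $v_n$ is chosen smaller than a negative power of $n$ matched to the truncation exponent $\delta$ and the Burkholder exponent $p$. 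With these parameters matched, the integral over $\mathcal{C}$ closes and the first assertion of the theorem is proved.

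Parts (i) and (ii) then follow from the non-appearance result combined with the structural correspondence in Lemma \ref{lem:jack 1}. For (ii), the monotone inverse $x_{c,H}(m)$ establishes a bijection between open intervals in $S'_{F^{c,H}}$ and open intervals in $S'_H$; applied via $\lambda \mapsto -1/m_{F^{c,H}}(\lambda)$ to a slightly enlarged interval $[a-\eta,b+\eta] \subset S'_{F^{c_n,H_n}}$, this isolates the spectral gap of $\mathbf{T}_n$ that defines the index $i_n$ through (\ref{eq:jack 4}). Lemma \ref{lem:jack 2} ensures admissibility of the enlargement; the non-appearance statement then forbids eigenvalues of $B_n$ in $[a,b]$, and a continuity-in-$(a,b)$ argument that tracks $\lambda_{i_n}^{B_n}$ and $\lambda_{i_n+1}^{B_n}$ as the gap shrinks pins the exact count on either side, giving $\lambda_{i_n}^{B_n} > b$ and $\lambda_{i_n+1}^{B_n} < a$ almost surely for large $n$. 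Part (i) is the boundary case: when $c(1-H(0)) > 1$, equation (\ref{eq:jack 3}) already assigns mass $1 - c^{-1}$ at $0$, so $F^{c,H}$ on $(0,\infty)$ is supported in $[x_0,\infty)$ for some $x_0 > 0$; applying non-appearance to each interval $[\epsilon, x_0 - \epsilon]$ for rational $\epsilon > 0$, and noting that the count of exact zero eigenvalues of $B_n$ is $n-m$, forces $\lambda_m^{B_n} \to x_0$ almost surely.
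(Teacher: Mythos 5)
The paper itself does not prove this theorem: the ``proof'' is a one-line citation to Bai and Silverstein \cite{bai98z,bai99z}, so what you are implicitly reconstructing is the argument of those two papers. Your outline correctly identifies the machinery that drives them --- truncation and centralization of the entries using $\exp|X_{11}|^4<\infty$, the martingale decomposition along columns with Burkholder's inequality, Sherman--Morrison resolvent expansions to pin $\exp m_n(z)$ to the fixed-point equation (\ref{eq:jack 2}), and Lemmas \ref{lem:jack 1}--\ref{lem:jack 2} to identify the gap in $S'_{H_n}$ with the gap in $S'_{F^{c_n,H_n}}$. But the step by which you conclude emptiness of $[a,b]$ has a quantitative gap that your own stated bounds cannot close. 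You propose to count eigenvalues in $[a,b]$ via $-\frac{n}{2\pi i}\oint_{\mathcal{C}} m_n(z)\,dz$ on a fixed contour and argue the integral vanishes because $m_n-m_n^{\circ}$ is small. For this to force the integer count to be zero you would need $|m_n(z)-m_n^{\circ}(z)|=o(n^{-1})$ uniformly on $\mathcal{C}$, yet the Burkholder estimate you cite gives only $O(n^{-1/2})$ for the fluctuation and $O(n^{-1})$ for the bias, so the integrand times $n$ is $O(n^{1/2})$, not $o(1)$. Bai and Silverstein do not use this contour count; they work at height $v_n\downarrow 0$ polynomially and only need $\sup_{x\in[a,b]}|m_n(x+iv_n)-m_n^{\circ}(x+iv_n)|=o\bigl((nv_n)^{-1}\bigr)$, a much weaker requirement since $v_n\to 0$, and then argue by contradiction: an eigenvalue at $\lambda\in[a,b]$ would force $\Im m_n(\lambda+iv_n)\gtrsim(nv_n)^{-1}$, whereas outside the support $\Im m_n^{\circ}(x+iv_n)=O(v_n)$; choosing $v_n$ in the right window makes both sources of $\Im m_n$ too small simultaneously. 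Obtaining the $o\bigl((nv_n)^{-1}\bigr)$ bound itself requires higher fixed moments ($2p$-th for $p$ large but finite) and the crucial extra input that, on $S'_{F^{c_n,H_n}}$, the limiting spectral density vanishes, which improves the resolvent bounds beyond the generic $1/\Im z$; your sketch gestures at this in the third paragraph but does not supply it, and without it neither their argument nor your contour version closes.

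The reduction of parts (i) and (ii) to the emptiness statement is also thinner than the actual content. Part (ii) is the exact-separation theorem of \cite{bai99z}, and its proof is not a soft ``continuity in $(a,b)$'' argument: it requires an interlacing/deformation scheme in which one continuously interpolates the population spectrum and tracks how the empirical eigenvalues cross the gap, using the no-eigenvalue result repeatedly along the interpolation together with rank-one perturbation bounds. Your sentence about ``tracking $\lambda_{i_n}^{B_n}$ and $\lambda_{i_n+1}^{B_n}$ as the gap shrinks'' names the goal without supplying the mechanism. For part (i), the claim $\lambda_m^{B_n}\to x_0$ does follow from combining (\ref{eq:jack 3}) with the non-appearance result over a countable exhaustion of $(0,x_0)$, plus the convergence $F^{B_n}\darrow F^{c,H}$ to rule out the limit being strictly larger than $x_0$; that much of your argument is sound. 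In summary: the overall strategy is recognizable as the Bai--Silverstein route, but the concluding counting step as you wrote it does not follow from the estimates you invoke, and the exact-separation half is missing its essential interpolation argument.
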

\begin{proof}
See proof of  Theorems 1.1 in \cite{bai98z,bai99z}).
\end{proof}

The behavior of the extreme eigenvalues of $(1/m){\bf X}_n{\bf X}_n^*$ leads to the following corollary of Theorem \ref{th:jack 1}.

\begin{corollary}
If $\lambda_1^{T_n}$ converges to the largest number in the support of $H$, then $\lambda_1^{B_n}$ converges a.s to the largest number in the support of $F$.  If $\lambda_n^{T_n}$ converges to the smallest number in the support of $H$, then $c\leq1$ ($c>1$) implies $\lambda_n^{B_n}$ ($\lambda_n^{(1/m)X_n^*T_nX_n}$) converges a.s. to the smallest number in the support of $F$ ($F^{c,H}$).
\end{corollary}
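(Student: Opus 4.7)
My plan is to sandwich each extreme eigenvalue between two one-sided bounds: the ``inside'' bound, forcing an eigenvalue to appear near the edge of the support, will come from the almost-sure weak convergence of the e.d.f., while the ``outside'' bound, preventing eigenvalues from escaping past the edge, will come from Theorem \ref{th:jack 1}(ii).

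\textbf{Step 1: largest eigenvalue.} Write $t^*=\sup S_H$ and $x^*=\sup S_F$. For the inside bound, at any continuity point $y\in(x^*-\varepsilon,x^*)$ we have $F(y)<1$, so the weak-convergence statement $F^{B_n}(y)\to F(y)$ forces a positive fraction of eigenvalues of ${\bf B}_n$ to lie above $y$; in particular $\lambda_1^{B_n}>y$ for all large $n$, giving $\liminf_n\lambda_1^{B_n}\ge x^*$. For the outside bound I would fix $x^*<a<b$ and invoke Lemma \ref{lem:jack 1}: on the connected component of $S'_{F^{c,H}}$ extending to $+\infty$, the map $m\mapsto x_{c,H}(m)$ is an increasing bijection onto the component of $S'_H$ extending to $+\infty$, so $-1/m_{F^{c,H}}(b)>t^*$. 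Since $\lambda_1^{T_n}\to t^*$, no eigenvalue of ${\bf T}_n$ exceeds $-1/m_{F^{c,H}}(b)$ eventually, so the index $i_n$ in Theorem \ref{th:jack 1}(ii) must equal $0$ for all large $n$, and its conclusion (using the convention $\lambda_0^{B_n}=\infty$) yields $\lambda_1^{B_n}<a$ almost surely. Letting $a\downarrow x^*$ closes the bound.

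\textbf{Step 2: smallest eigenvalue, $c\le 1$.} With $t_*=\inf S_H$ and $x_*=\inf S_F$, the inside bound $\limsup_n\lambda_n^{B_n}\le x_*$ follows symmetrically from weak convergence at a continuity point just above $x_*$. For the outside bound I would pick $0<a<b<x_*$ and, by the analogous application of Lemma \ref{lem:jack 1} on the components of $S'_{F^{c,H}}$ and $S'_H$ abutting the respective lower edges, deduce $-1/m_{F^{c,H}}(b)<t_*$. Because $\lambda_n^{T_n}\to t_*$, all $n$ eigenvalues of ${\bf T}_n$ eventually exceed $-1/m_{F^{c,H}}(b)$, forcing $i_n=n$ in Theorem \ref{th:jack 1}(ii); combined with the convention $\lambda_{n+1}^{T_n}=0<-1/m_{F^{c,H}}(a)$ this makes the theorem applicable and yields $\lambda_n^{B_n}>b$ almost surely, and sending $b\uparrow x_*$ finishes the case.

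\textbf{Step 3: $c>1$.} Now ${\bf B}_n$ is forced to have at least $n-m$ zero eigenvalues, so $\lambda_n^{B_n}=0$ trivially; the meaningful object is the smallest eigenvalue of the $m\times m$ companion $(1/m){\bf X}_n^*{\bf T}_n{\bf X}_n$, whose e.d.f.\ converges to $F^{c,H}$ via the identity (\ref{eq:identity above 2}). Repeating the two-step sandwich with this companion matrix in place of ${\bf B}_n$ and $F^{c,H}$ in place of $F$ produces convergence to $\inf S_{F^{c,H}}$.

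\textbf{Main obstacle.} The delicate step, in my view, is the geometric input to the outside bounds: verifying that $-1/m_{F^{c,H}}(b)$ lies strictly past $t^*$ (resp.\ strictly before $t_*$) for every $b$ just outside the edge of $S_F$, not merely for $b$ far away. This demands identifying the correct branch of $x_{c,H}(m)$ and using Lemmas \ref{lem:jack 1}--\ref{lem:jack 2} to exclude any degenerate collision where the support edges of $F^{c,H}$ and $H$ match up under the inverse. Once that branch identification is secured, the remainder --- weak-convergence consequences, collapsing $i_n$ to $0$ or $n$ using $\lambda_1^{T_n}\to t^*$ or $\lambda_n^{T_n}\to t_*$, and the transfer to the companion matrix in the $c>1$ case --- is essentially bookkeeping.
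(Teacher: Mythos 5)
The paper itself supplies no proof of this corollary --- it is stated as an immediate consequence of Theorem~\ref{th:jack 1} together with the Yin--Bai--Krishnaiah/Bai--Yin extreme eigenvalue results --- so there is no written argument to compare against. Your two-sided sandwich (weak convergence for the inner bound, exact separation with $i_n=0$ or $i_n=n$ for the outer bound) is the natural way to make that suggestion precise, and it is correct in structure. Three remarks on the details.

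First, your statement that $m\mapsto x_{c,H}(m)$ gives an increasing bijection from the rightmost component of $S'_{F^{c,H}}$ onto ``the component of $S'_H$ extending to $+\infty$'' is slightly too strong: the image of $b\mapsto -1/m_{F^{c,H}}(b)$ over $b\in(x^*,\infty)$ is an interval $(\gamma,\infty)$ with $\gamma\ge t^*$, and $\gamma$ need not equal $t^*$. For the purpose of forcing $i_n=0$ this does not matter --- you only need $-1/m_{F^{c,H}}(a)>t^*$ for every $a>x^*$, which follows from $\gamma\ge t^*$ together with strict monotonicity of $m_{F^{c,H}}$ off the support --- but the phrasing should be weakened.

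Second, condition (*) of Theorem~\ref{th:jack 1} concerns the finite-$n$ supports $S_{F^{c_n,H_n}}$, not the limit $S_F$, and you do not explicitly check it. This is where the Bai--Yin extreme eigenvalue behavior (which the paper flags as the driving ingredient) actually enters: $\lambda_1^{T_n}\to t^*$ together with $H_n\to H$ ensures, via the ``reversed steps'' remark preceding Theorem~\ref{th:jack 2}, that for any $a>x^*$ the interval $[a,b]$ eventually lies in an open interval outside $S_{F^{c_n,H_n}}$. The analogous verification is needed at the lower edge. Without it, the invocation of Theorem~\ref{th:jack 1}(ii) is not yet licensed.

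Third, for the lower edge with $c\le1$ you need $m_{F^{c,H}}(b)<0$ on $(0,x_*)$ so that $-1/m_{F^{c,H}}(b)$ is a positive number below $t_*$. When $c<1$ (so $F^{c,H}$ has an atom at $0$) this follows from Lemma~\ref{lem:jack 1}: $m_{F^{c,H}}$ is continuous, nonvanishing, and $\to-\infty$ as $b\to0^+$, hence stays negative throughout $(0,x_*)$. When $c=1$ with $H(0)=0$ the interval $(0,x_*)$ may be empty (typically $x_*=0$), so the outer bound is vacuous and you should handle that degenerate case separately. None of these points undermines the plan, but they are precisely the ``bookkeeping'' that the proof needs spelled out; the branch identification you flag as the main obstacle is correctly identified.
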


In Theorem \ref{th:jack 1}, Case (i) applies when $n>m$, whereby the rank of ${\bf B}_n$ would be at most $m$, the conclusion asserting, that with probability 1, for all $n$ large, the  rank is equal to $m$.  From Lemma \ref{lem:jack 1}, Case (ii) of Theorem \ref{th:jack 1} covers all intervals in $S'_{F^{c,H}}$ on $(0,\infty)$ resulting from intervals on $(-\infty,0)$ where $x_{c,H}$ is increasing.  For all $n$ large $x_{c_n,H_n}$ is increasing on
$[m_{F^{c_n,H_n}}(a),m_{F^{c_n,H_n}}(b)]$, which, from inspecting the vertical asymptotes of $x_{c_n,H_n}$ and Lemma \ref{lem:jack 1}, must be due to the existence of $\lambda_{i_n}^{T_n}$, $\lambda_{i_n+1}^{T_n}$
satisfying (\ref{eq:jack 4}).

Theorem \ref{th:jack 1}  easily extends to random ${\bf T}_n$, independent of $\{{\bf X}_{ij}:i,j\ge1\}$  with the aid of Tonelli's Theorem \cite[pp. 234]{billingsley95a}, provided the condition (*) on $[a,b]$ is strengthened to:

\begin{itemize}
\item{(**)} With probability 1 for all $n$ large $[a,b]$ (nonrandom) lies in an open interval outside the support of $F^{c_n,H_n}$.
\end{itemize}
Indeed, let $T$ denote the probability space generating $\{T_n\}$, $X$ the probability space generating  $\{X_{ij}:i,j\ge1\}$. Let their
respective measures be denoted by $\P_T$,$\P_X$, the product measure on
$T\times X$ by $\P_{T\times X}$.  Consider, for example
in case (ii), we define
$$A=\{\lambda_{i_n}^{B_n}>b\text{ and $\lambda_{i_n+1}^{B_n}<a$
for all large $n$}\}.$$
Let $t\in T$ be an element of the event defined in (**).  Then by
Theorem \ref{th:jack 1} $I_A((t,x))=1$ for all $x$ contained in a subset of $X$ having
probability 1.
Therefore, by Tonelli's theorem
$$\P(A)=\int I_A(t,x)dP_{T\times X}((t,x))=
\int\left[\int I_A(t,x)dP_X(x)\right]dP_T(t)=\int 1dP_T(t)=1.$$

Consider now case (ii) of Theorem \ref{th:jack 1} in terms of the corresponding interval outside the support of $H$ and the $H_n$'s. By Lemma \ref{lem:jack 1}  and condition (*), we have the
existence of an $\epsilon>0$ such that
$0\notin[m_{F^{c,H}}(a)-\epsilon,m_{F^{c,H}}(b)+\epsilon]$, and
for all $n$ large
\begin{equation}\label{eq:jack 5}
\frac{d}{dm}x_{c_n,H_n}(m)=\frac1{m^2}\left(1-c_n\int\frac{(\lambda m)^2}
{(1+\lambda m)^2}dH_n(\lambda)\right)>0,\quad m\in
[m_{F^{c,H}}(a)-\epsilon,m_{F^{c,H}}(b)+\epsilon].
\end{equation}

 Let $t_a=-1/m_{F^{c,H}}(a)$, $t_b=-1/m_{F^{c,H}}(b)$.  Then by Lemma \ref{lem:jack 1} we have the existence of an $\epsilon'>0$ for which $t_a-\epsilon'>0$ and $[t_a-\epsilon',t_b+\epsilon']\subset S'_{H_n}$ for all $n$ large.
Moreover, by (\ref{eq:jack 5}) we have for all $n$ large
\begin{equation}\label{eq:jack 6}
c_n\int\frac{\lambda^2}{(\lambda-t)^2}dH_n(\lambda)<1,\quad t\in
[t_a-\epsilon',t_b+\epsilon'].
\end{equation}
Necessarily, $\lambda_{i_n}^{T_n}>t_b+\epsilon'$ and $\lambda_{i_n+1}^{T_n}<t_a-\epsilon'$.

Notice the steps can be completely reversed, that is, beginning with an interval $[t_a,t_b]$, with $t_a>0$, lying in an open interval in $S'_{H_n}$ for all $n$ large and satisfying (\ref{eq:jack 6}) for some $\epsilon'>0$, will yield $[a,b]$, with $a=x_{c,H}(-1/t_a)$, $b=x_{c,H}(-1/t_b)$, satisfying condition (*).  Case (ii) applies, since $[a,b]$ is within the range of $x_{c,H}(m)$ for $m<0$.  If $c(1-H(0))>1$, then we would have $a>x_0$.

\subsection{Behavior of spiked eigenvalues}
Suppose now the ${\bf T}_n$'s are altered, where a finite number of eigenvalues are  interspersed between the previously adjacent eigenvalues $\lambda^{T_n}_{i_n+1}$ and $\lambda^{T_n}_{i_n}$.  It is clear
that the limiting $F$ will remain unchanged.  However, the graph of $x_{c_n,H_n}$ on $(-1/\lambda^{T_n}_{i_n+1},-1/\lambda^{T_n}_{i_n})$ will now contain vertical asymptotes.  If the graph remains increasing on two intervals for all $n$ large, each one between successive asymptotes, then because of Theorem \ref{th:jack 1}, with probability one, eigenvalues of the new
${\bf B}_n$ will appear in $S'_{F^{c,H}}$ for all $n$ large.

Theorem \ref{th:jack 2} below shows this will happen when a ``sprinkled", or ``spiked" eigenvalue lies in $(t_a,t_b)$.  Theorem \ref{th:jack 3} provides a converse, in the sense that any isolated eigenvalue of $B_n$ must be due to a spiked eigenvalue, the
absence of which corresponds to case (ii) of Theorem \ref{th:jack 1}.

Theorem \ref{th:jack 2}, below, allows the number of spiked eigenvalues to grow with $n$, provided it remains $o(n)$.

\begin{theorem}\label{th:jack 2}
Assume in additon to the assumptions in Theorem \ref{th:jack 1}
on the ${\bf X}_{ij}$ and ${\bf T}_n$:

\indent(a)  There are $\ell=o(n)$ positive eigenvalues of ${\bf T}_n$ all converging uniformly to $t'$, a positive number.
Denote by $\hat H_n$ the e..d.f. of the $n-\ell$ other eigenvalues of ${\bf T}_n$.
\hfil\break
\indent(b)  There exists positive $t_a<t_b$ contained in an interval $(\alpha,\beta)$ with $\alpha>0$ which is outside the support of $\hat H_n$ for all large $n$, such that for these $n$
$$c_n\int\frac{\lambda^2}{(\lambda-t)^2}d\hat H_n(\lambda)\leq 1 $$
{\sl for $t=t_a,t_b$.}\hfil\break
\indent(c) {\sl $t'\in(t_a,t_b)$.}\hfil\break

Suppose $\lambda^{T_n}_{i_n},\ldots, \lambda^{T_n}_{i_n+\ell-1}$ are the eigenvalues stated in (a).  Then,
with probability one
\begin{equation}\label{eq:jack 7a}
\lim_{n\to\infty}\lambda^{B_n}_{i_n}=\cdots=\lim_{n\to\infty}
 \lambda^{B_n}_{i_n+\ell-1}
=t'\left(1+c\int\frac{\lambda}{t'-\lambda}
dH(\lambda)\right).
\end{equation}
\end{theorem}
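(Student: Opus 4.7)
The plan is to reduce the statement to the inverse-Stieltjes formula (\ref{eq:jack 2}) evaluated at $m=-1/t'$ (which immediately yields the right-hand side of (\ref{eq:jack 7a})), and to localize the $\ell$ spike-induced sample eigenvalues of $B_n$ by sandwiching them between two gaps in $S_{F^{c_n,H_n}}$ produced by Theorem \ref{th:jack 1} case (ii). Let $H_n$ denote the e.d.f.\ of $T_n$, so that $H_n=\frac{n-\ell}{n}\hat H_n+\frac{1}{n}\sum_{k=0}^{\ell-1}\delta_{\lambda_{i_n+k}^{T_n}}$. Since $\ell=o(n)$ and the spikes are uniformly bounded, $H_n$ and $\hat H_n$ share the same weak limit $H$, so $F^{c_n,H_n}\to F^{c,H}$.

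Fix small parameters $\eta,\epsilon>0$, to be sent to zero at the end. By Lemma \ref{lem:jack 2} applied to $\hat H_n$ on $[-1/t_b,-1/t_a]$, condition (b) upgrades to the strict interior inequality $c_n\int \lambda^2/(\lambda-t)^2\,d\hat H_n\le 1-\delta$ at each of the four inner points $t\in\{t_a+\eta,\,t'-\epsilon,\,t'+\epsilon,\,t_b-\eta\}$, for some $\delta=\delta(\eta,\epsilon)>0$ independent of $n$. To propagate this bound from $\hat H_n$ to $H_n$, observe that the extra spike contribution $\frac{c_n}{n}\sum_k \lambda_{i_n+k}^2/(\lambda_{i_n+k}-t)^2$ is uniformly $O(\ell/n)$ whenever $t$ is bounded away from $t'$ by at least $\min(\eta,\epsilon)$, so for $n$ large the full $H_n$-integral at these four points still lies strictly below $1$. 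Applying Lemma \ref{lem:jack 2} to $H_n$ on each of $[-1/(t_a+\eta),-1/(t'-\epsilon)]$ and $[-1/(t'+\epsilon),-1/(t_b-\eta)]$ (both satisfying condition (2) of Lemma \ref{lem:jack 1} since for large $n$ the spikes cluster inside the excluded $\epsilon$-neighborhood of $t'$) yields $\frac{d}{dm}x_{c_n,H_n}>0$ throughout both intervals. Lemma \ref{lem:jack 1} then identifies their $x_{c_n,H_n}$-images as gaps $[a_j^{(n)},b_j^{(n)}]\subset S'_{F^{c_n,H_n}}$ for $j=1,2$, with $b_1^{(n)}<a_2^{(n)}$.

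I then apply Theorem \ref{th:jack 1} case (ii) to a fixed subinterval of each gap. For the lower gap, the bracketing population eigenvalues are the smallest spike $\lambda_{i_n+\ell-1}^{T_n}\to t'>t'-\epsilon$ and the first bulk eigenvalue below $t_a+\eta$ (the bulk of $\hat H_n$ is supported outside $(\alpha,\beta)\supset(t_a,t_b)$), so the relevant index in Theorem \ref{th:jack 1} is precisely $i_n+\ell-1$, forcing $\lambda_{i_n+\ell-1}^{B_n}>b_1^{(n)}$ almost surely for all large $n$; symmetrically, the upper gap yields $\lambda_{i_n}^{B_n}<a_2^{(n)}$. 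Combined with the ordering $\lambda_{i_n}^{B_n}\ge\cdots\ge\lambda_{i_n+\ell-1}^{B_n}$, all $\ell$ eigenvalues $\lambda_{i_n}^{B_n},\ldots,\lambda_{i_n+\ell-1}^{B_n}$ lie in $[b_1^{(n)},a_2^{(n)}]$. With $\eta,\epsilon$ fixed, the spike contribution to $x_{c_n,H_n}$ at the fixed regular $m$-values $-1/(t'\mp\epsilon)$ is $O(\ell/n)$, so $b_1^{(n)}\to x_{c,H}(-1/(t'-\epsilon))$ and $a_2^{(n)}\to x_{c,H}(-1/(t'+\epsilon))$. Letting $\epsilon\to 0$, continuity of $x_{c,H}$ at $m=-1/t'$ (note $t'\notin S_H$ since $(t_a,t_b)\subset(\alpha,\beta)$ is a gap for $H$) collapses both bounds to the common value $x_{c,H}(-1/t')$, which by direct substitution into (\ref{eq:jack 2}) equals $t'\bigl(1+c\int \lambda/(t'-\lambda)\,dH(\lambda)\bigr)$, the stated limit.

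The main technical obstacle is maintaining strict-inequality control on $\frac{d}{dm}x_{c_n,H_n}$ uniformly in $n$ under the spike perturbation: condition (b) provides only a weak inequality at $t_a,t_b$, and the spike contribution to $c_n\int\lambda^2/(\lambda-t)^2\,dH_n$ is non-negligible and in fact diverges as $t\to t'$. The remedy is the two-scale $(\eta,\epsilon)$ argument above: first shrink the endpoints inward to harvest a definite margin $\delta(\eta,\epsilon)>0$ from Lemma \ref{lem:jack 2} applied to $\hat H_n$, then absorb the $O(\ell/n)$ spike correction into this margin at points bounded away from $t'$, and only at the end send $n\to\infty$ followed by $\eta,\epsilon\to 0$.
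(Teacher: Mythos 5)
Your proof takes essentially the same route as the paper: use condition (b) plus Lemma~\ref{lem:jack 2} to establish strict positivity of $\frac{d}{dm}x_{c_n,H_n}$ on two sub-intervals straddling $m=-1/t'$, invoke Lemma~\ref{lem:jack 1} to obtain gaps in the support of $F^{c_n,H_n}$ on either side of $x_{c,H}(-1/t')$, apply case~(ii) of Theorem~\ref{th:jack 1} to pin the spike sample eigenvalues between the gaps, and then let the localization parameter tend to zero. The only cosmetic differences are (i) your gaps span all the way from $-1/(t_a+\eta)$ and $-1/(t_b-\eta)$ inward, whereas the paper uses thin intervals $[-1/t'-\epsilon-\delta',-1/t'-\epsilon]$ and $[-1/t'+\epsilon,-1/t'+\epsilon+\delta']$ of width $\delta'$, and (ii) you parametrize by $t'\mp\epsilon$ in $t$-space rather than $-1/t'\mp\epsilon$ in $m$-space. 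Two places where your writeup is slightly looser than the paper's and would need to be filled in: the uniform-in-$n$ margin $\delta(\eta,\epsilon)$ needs a short limit-passage argument via $\hat H_n \Rightarrow H$, dominated convergence, and compactness (applying Lemma~\ref{lem:jack 2} at each finite $n$ alone gives $n$-dependent strictness), and Theorem~\ref{th:jack 1} case~(ii) requires a fixed, nonrandom $[a,b]$, so one should explicitly fix an $n$-independent subinterval of each gap — you gesture at this but work with $n$-dependent endpoints $a_2^{(n)},b_1^{(n)}$ in the concluding sandwich.
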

\begin{proof}
For $m\in[-1/t_a,-1/t_b]\cap\{-1/t'\}^c$, we have
$$x_{c_n,H_n}(m)=
-\frac1m+c_n\left(\frac1n\sum_{j=i_n}^{i_n+\ell-1}\frac{\lambda_j^{T_n}}
{1+\lambda_j^{T_n}m}+\frac{n-\ell}n
\int\frac{\lambda}{1+\lambda m}d\hat H_n(\lambda)\right).$$

By considering continuity points of $H$ in $(\alpha,\beta)$ we see that $H$ is constant on this interval, and consequently, this interval is also contained in $S'_H$.

Because of (b) we have $\frac{d}{dm}x_{c,H}(m)\ge0$ for $m=-1/t_a,-1/t_b$ (recall (\ref{eq:jack 5}),(\ref{eq:jack 6})).

By Lemma \ref{lem:jack 2} we therefore have $\frac{d}{dm}x_{c,H}(m)>0$ for all $m\in(-1/t_a,-1/t_b)$. Thus we can find $[\underline t_a,\underline t_b]\subset[t_a,t_b]$ and $\delta
>0$, such that $t'\in(\underline t_a,\underline t_b)$ and for all $n$ large $\frac{d}{dm}x_{c_n,\hat H_n}(m)\ge\delta$ for all $m\in[-1/\underline t_a,-1/\underline t_b]$.

It follows that for any positive $\epsilon$ sufficiently small, there exist positive $\delta'$ with $\delta'\leq\epsilon$, such that, for all $n$ large,
both $[-1/t'-\epsilon-\delta',
-1/t'-\epsilon]$, and $[-1/t'+\epsilon,-1/t'+\epsilon+\delta']$:
\begin{itemize}
\item{1)} are
contained in $[-1/\underline t_a,-/\underline t_b]$, and
\item{2)}
$\frac{d}{dm}x_{c_n,H_n}(m)>0$
for all $m$ contained in these two intervals.
\end{itemize}
Therefore, by Lemma \ref{lem:jack 1}, for all $n$ large,
$[x_{c_n,H_n}(-1/t'-\epsilon-\delta'),
x_{c_n,H_n}(-1/t'-\epsilon)]$ and $[x_{c_n,H_n}(-1/t'+\epsilon),
x_{c_n,H_n}(-1/t'+\epsilon+\delta')]$ lie
outside the support of $F^{c_n,H_n}$.
Let $a_L=x_{c,H}(-1/t'-\epsilon-\tfrac23\delta')$,
$b_L=x_{c,H}(-1/t'-\epsilon-\tfrac13\delta')$,
$a_R=x_{c,H}(-1/t'+\epsilon+\tfrac13\delta')$,
and $b_R=x_{c,H}(-1/t'+\epsilon+\tfrac23\delta')$.
Then for all $n$ large
\begin{multline}
[a_L,b_L]\subset(x_{c,H}(-1/t'-\epsilon-\tfrac56\delta'),
x_{c,H}(-1/t'-\epsilon-\tfrac16\delta'))\hfill\\ \hfill\subset
[x_{c_n,H_n}(-1/t'-\epsilon-\delta'),x_{c_n,H_n}(-1/t'-\epsilon)]
\end{multline}
and
\begin{multline}
[a_R,b_R]\subset(x_{c,H}(-1/t'+\epsilon+\tfrac16\delta'),
x_{c,H}(-1/t'+\epsilon+\tfrac56\delta'))\hfill\\ \hfill\subset
[x_{c_n,H_n}(-1/t'+\epsilon),
x_{c_n,H_n}(-1/t'+\epsilon+\delta')].
\end{multline}

It follows then that $[a_L,b_L]$, $[a_R,b_R]$ each lie in an open interval in $S'_{F^{c_n,H_n}}$ for all $n$ large.  Moreover $m_{F^{c,H}}(b_R)<0$. Therefore, case (ii) of Theorem \ref{th:jack 1} applies and we have
$$\P(\lambda^{B_n}_{i_n}<a_R\quad\text{and}\quad\lambda^{B_n}_{i_n+\ell-1}>b_L\quad\text{for all
$n$ large})=1.$$
Therefore, considering a countable collection of $\epsilon$'s
converging to zero,
we conclude that, with probability 1
$$
\lim_{n\to\infty}\lambda^{B_n}_{i_n}=\lim_{n\to\infty}\lambda^{B_n}_{i_n+\ell-1}
=x_{c,H}(-1/t')=(\ref{eq:jack 7a}).
$$
\end{proof}

\begin{theorem}\label{th:jack 3}
Assume, besides the assumptions in Theorem \ref{th:jack 1}, there is an eigenvalue of ${\bf B}_n$ which converges in probability to a nonrandom positive number, $\lambda'\in S'_F$.  Let interval $[a,b]\in S'_F$, with $a>0$,
be such that $\lambda'\in(a,b)$, and let $t_a=-1/m_{c,H}(a)$, $t'=-1/m_{c,H}(\lambda')$, $t_b=-1/m_{c,H}(b)$ (finite by Lemma \ref{lem:jack 1}).
Then $0<t_a<t'<t_b$, implying (c) of Theorem \ref{th:jack 2}. Let $\ell=\ell(n)$
denote the number of eigenvalues of ${\bf T}_n$ contained in $[t_a,t_b]$ and let $\hat H_n$ denote the e.d.f. of the other $n-\ell$ eigenvalues of ${\bf T}_n$.  Then $\ell=o(n)$ and (b) of Theorem \ref{th:jack 2} is true.  If $\ell$ remains bounded, then (a) of Theorem \ref{th:jack 2} also holds.
\end{theorem}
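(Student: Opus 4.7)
The plan is to reverse the spectral analysis behind Theorem \ref{th:jack 2}: we are given an isolated limit $\lambda' \in S'_F$ of an eigenvalue of $B_n$, and we must pull the gap at $\lambda'$ back through the Silverstein relation to a corresponding gap $[t_a, t_b] \subset S'_H$ on the population side, show that only $o(n)$ population eigenvalues can be sprinkled in that gap, and finally verify the derivative condition (b) of Theorem \ref{th:jack 2} for the trimmed e.d.f.\ $\hat H_n$. The three pieces interlock via Lemma \ref{lem:jack 1} (which ties gaps of $F^{c,H}$ to gaps of $H$), the weak convergence $H_n \Rightarrow H$, and the derivative formula (\ref{eq:jack 5}).

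For the ordering $0 < t_a < t' < t_b$, Lemma \ref{lem:jack 1} tells us that on the gap $[a,b] \subset S'_F$ the function $m_{F^{c,H}}$ is real-valued, nonzero, strictly increasing, and that $-1/m_{F^{c,H}}$ maps $[a,b]$ into $S'_H$. Continuity and non-vanishing imply a constant sign of $m_{F^{c,H}}$ on $[a,b]$; the sign must be negative, since the Stieltjes representation combined with the nonnegative support of $F^{c,H}$ forces $m_{F^{c,H}}(x) < 0$ whenever $x > 0$ lies to the right of the positive-mass portion of $F^{c,H}$, as is the case for $a$ in the gap containing $\lambda'$. Applying the monotone, sign-reversing map $t = -1/m$ to $m_{F^{c,H}}(a) < m_{F^{c,H}}(\lambda') < m_{F^{c,H}}(b) < 0$ then yields $0 < t_a < t' < t_b$, which is condition (c).

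To control $\ell$, decompose $H_n = (1 - \ell/n)\, \hat H_n + (\ell/n)\, \tilde H_n$, where $\tilde H_n$ supports the $\ell$ population eigenvalues in $[t_a, t_b]$. If $\ell(n)/n$ did not tend to zero, then along a subsequence $H_n([t_a, t_b]) \ge \ell/n \ge \eta > 0$; but $[t_a, t_b] \subset S'_H$ (from step one) gives $H(\{t_a\}) = H(\{t_b\}) = 0$, and the portmanteau theorem forces $H_n([t_a, t_b]) \to H([t_a, t_b]) = 0$, a contradiction. Hence $\ell/n \to 0$ and $\hat H_n \Rightarrow H$. For condition (b), the required inequality $c \int \lambda^2 / (\lambda - t)^2 \, d\hat H_n(\lambda) \le 1$ at $t = t_a, t_b$ is equivalent via (\ref{eq:jack 5}) to $\frac{d}{dm} x_{c_n, \hat H_n}(m) \ge 0$ at $m = -1/t_a, -1/t_b$. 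Since $\hat H_n \Rightarrow H$ and the integrand $\lambda^2/(\lambda - t)^2$ is continuous and bounded on a neighborhood of $[t_a, t_b]$ outside $S_{\hat H_n}$, dominated convergence passes the integrals to the limit, where Lemma \ref{lem:jack 1}(3) together with Lemma \ref{lem:jack 2} gives $\frac{d}{dm} x_{c,H}(m) \ge 0$ on $[-1/t_a, -1/t_b]$ (strict in the interior). A mild shrinking of $[t_a, t_b]$ to a compactly contained sub-interval then secures the required inequality for all large $n$, which is (b). Finally, when $\ell$ remains bounded, every convergent sub-subsequence of the $\ell$ eigenvalues in $[t_a, t_b]$ has a limit $t^*$, and by the forward direction (Theorem \ref{th:jack 2}) this $t^*$ would generate a sample eigenvalue of $B_n$ converging to $x_{c,H}(-1/t^*)$; since only $\lambda' = x_{c,H}(-1/t')$ is assumed to occur and $x_{c,H}$ is strictly monotone (hence injective) on $[-1/t_a, -1/t_b]$, every such $t^*$ equals $t'$, giving (a).

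The main obstacle I anticipate is reconciling the weak inequality $\le 1$ required by (b) with the strictly positive derivative needed to subsequently invoke Theorem \ref{th:jack 2} and the boundary condition (\ref{eq:jack 6}); this is handled by the shrinking trick together with Lemma \ref{lem:jack 2}. A related technical hurdle is uniformly propagating the convergence $H_n \to H$ through the kernel $\lambda^2/(\lambda - t)^2$ as $t$ ranges over a compact set near $[t_a, t_b]$, for which the spectral-norm boundedness of $T_n$ assumed in Theorem \ref{th:jack 1} is essential in order to produce a uniform dominating function.
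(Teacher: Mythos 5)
Your proposal follows essentially the same route as the paper — pull the gap $[a,b]$ back through Lemma \ref{lem:jack 1} to a gap $[t_a,t_b] \subset S'_H$, control $\ell$ by weak convergence, verify (b) by passing the derivative formula (\ref{eq:jack 5}) to the limit via dominated convergence, and verify (a) by a compactness/uniqueness argument through Theorem \ref{th:jack 2}. But there is a genuine gap in your argument for $0 < t_a$, i.e.\ for the sign of $m_{F^{c,H}}$ on $[a,b]$. You assert that $a$ ``lies to the right of the positive-mass portion of $F^{c,H}$,'' but nothing in the hypotheses guarantees that: the gap $[a,b] \subset S'_F$ containing $\lambda'$ could sit \emph{between} two components of the support of $F$, in which case the Stieltjes integral $\int (\lambda - a)^{-1}\,dF^{c,H}(\lambda)$ receives contributions of both signs and your claimed inequality does not follow from the representation alone. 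The paper establishes $m_{F^{c,H}}(b) < 0$ by a case split tied to Theorem \ref{th:jack 1}: when $c(1-H(0)) \leq 1$ the conclusion is part of case (ii); when $c(1-H(0)) > 1$ it first observes that necessarily $\lambda' > x_0$ (since $\lambda_m^{B_n} \to x_0 \in S_F$ and an eigenvalue of $B_n$ converging to $\lambda'$ must exceed $\lambda_m^{B_n}$, while $\lambda' \in S'_F$ excludes equality), so $[a,b]$ again falls under case (ii) of Theorem \ref{th:jack 1}, giving $m_{F^{c,H}}(b) < 0$ and hence $0 < t_a < t' < t_b$. Without some such argument, the positivity of $t_a$ is not proved.

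Two smaller points. First, for (b) the detour through Lemma \ref{lem:jack 2} and a ``mild shrinking'' is unnecessary: since $a, b \in S'_F$, Lemma \ref{lem:jack 1}(3) already gives \emph{strict} positivity of the derivative at the endpoints $m_{c,H}(a), m_{c,H}(b)$, and dominated convergence (using $\hat H_n \Rightarrow H$) then yields strict positivity for all large $n$, which is all you need. Second, your argument for (a) is essentially the paper's idea but is stated too loosely: Theorem \ref{th:jack 2} is formulated for a single full sequence with all $\ell$ spiked eigenvalues converging uniformly to one $t'$, whereas your sub-subsequence may carry several distinct cluster points. The paper resolves this by constructing a replacement sequence $\{{\bf T}_n'\}$ that agrees with ${\bf T}_n$ on the subsequence $\{n_j\}$ and interpolates off it so that Theorem \ref{th:jack 2} applies, then derives a contradiction with injectivity of $x_{c,H}(-1/t)$ on the gap. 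Your sketch conveys the intent, but the replacement construction is the step that makes it rigorous and should not be omitted.
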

\begin{proof}
By Lemma \ref{lem:jack 1}, $[t_a,t_b]\in S'_H$, and for a suitable positive $\epsilon$, $x_{c,H}$ is increasing on $[m_{c,H}(a)-\epsilon,m_{c,H}(b)+\epsilon]$, which does not contain 0.

Therefore $t_a<t'<t_b$.  If $c(1-H(0))>1$, that is, case (i) of Theorem \ref{th:jack 1} holds, then $a>x_0$, since $x_0$ is the almost sure limit of $\lambda_m^{B_n}$ so $\lambda'$ cannot be smaller than it, and necessarily $x_0\in S_F$. Therefore $m_{c,H}(b)<0$, so that $0<t_a$.

It must be the case that only $o(n)$ eigenvalues of $t_n$ lie in $[t_a,t_b]$, since otherwise $[t_a,t_b]$ would not be outside the support of $H$.  We have then $\hat H_n\darrow H$ as $n\to\infty$, so
from the dominated convergence theorem we have
$\frac{d}{dm}x_{c_n,\hat H_n}(m)\to\frac{d}{dm}x_{c,H}(m)$ for all
$m\in[m_{c,H}(a)-\epsilon,m_{c,H}(b)+\epsilon]$, implying for
all $n$ large $\frac{d}{dm}x_{c_n,\hat H_n}(m)>0$ for all
$m\in[m_{c,H}(a),m_{c,H}(b)]$. Therefore (b) is true

We assume now that $\ell$ is bounded. Suppose (a) does not hold.  Then we could find a subsequence $\{n_j\}$ of the natural numbers for which
$\ell'=\ell'(n)$ of the $\ell$ eigenvalues converge to a $\underline t'\neq t'$, the remaining $\ell-\ell'$, if positive, eigenvalues remaining a positive
distance $d$ from $\underline t'$.  Replace $\{{\bf T}_n\}$ with $\{{\bf T}_n'\}$ which matches the
original sequence when $n=n_j$ and  for $n\neq n_j$, ${\bf T}_n'$ has $\ell'$  eigenvalues equal to $\underline t'$, with the remaining $\ell-\ell'$, again, if positive, eigenvalues of ${\bf T}_n'$ at least $d$ away from $\underline t'$. Then we have by Theorem \ref{th:jack 1}, (\ref{eq:jack 7a}), with $t'$ replaced by $\underline t'$,  holding for
$\ell'$ of the eigenvalues of $(1/m){{\bf T}_n'}^{1/2}{\bf X}_n{\bf X}_n^*
{T_n'}^{1/2}$.  Thus, on $\{n_j\}$, we have the almost sure convergence of $\ell$ eigenvalues of ${\bf B}_n$ to $x_{c,H}(-1/\underline t')\in[a,b]$ which, because $x_{c,H}(-1/t)$
is an increasing function, does not equal $\lambda'=x_{c,H}(-1/t')$. This contradicts the assumption of convergence in probability to eigenvalues to only one number, namely $\lambda'$.  Therefore (a) holds.
\end{proof}

\subsection{Behavior of extreme eigenvalues}

Consider now $t'$ lying on either side of the support of $\hat H$. Let $\hat\lambda_n^{\min}$ and $\hat\lambda_n^{\max}$ denote, respectively,
the smallest and largest numbers in the support of $\hat H_n$ Notice that $g_n(t)\equiv c_n\int\frac{\lambda^2}{(\lambda-t)^2}d\hat H_n(t)$
is decreasing for $t>\hat\lambda_n^{\max}$, and if $\hat\lambda_n^{\min}>0$, $g_n$ is increasing on $(0,\hat\lambda_n^{\min})$.

Therefore, if for all $n$ large, $t'>\hat\lambda_n^{\max}$, it is necessary and sufficient to find a $t_a\in(\hat\lambda_n^{\max},t')$ for which $g(t_a)\leq1$ in order for (\ref{eq:jack 7a})
to hold.  Similarly, if for all $n$ large $t'\in(0,\hat\lambda_n^{\min})$, then it is necessary and sufficient to find a $t_b\in(0,t')$ for which $g_n(t_b)\leq1$
in order for (\ref{eq:jack 7a}) to hold.  Notice if $c(1-H(0))>1$ then $g_n(t)>1$ for all $t\leq\hat\lambda_n^{\min}$ and all $n$ large.

Let for d.f. $G$ with bounded support, $\lambda_G^{\max}$ denote the largest number in $S_G$. If there is a $\tau>\lambda_H^{\max}$  for which $g(\tau)= c\int\frac{\lambda^2}{(\lambda-t)^2}dH(t)=1$, and if
$\limsup_n\hat\lambda_n^{\max}<\tau$, then $\tau$ can be used as a threshold for $t'\in(\limsup_n\hat\lambda_n^{\max},\infty)$.  Indeed,
by the dominated convergence theorem, $\lim_{n\to\infty}g_n(t')=g(t')$. Therefore, if $t'>\tau$, conditions (b) and (c) of Theorem \ref{th:jack 2} hold, with $t_a=\tau$, and $t_b$ any arbitrarily large number.

On the other hand, suppose $\lambda_{i_n}^{T_n}, \ldots,\lambda_{i_n+\ell-1}^{T_n}$, where $\ell$ remains bounded,
are the eigenvalues of ${\bf T}_n$ approaching the interval
$(\limsup_n\hat\lambda_n^{\max},\tau]$. Then by Theorem \ref{th:jack 3}, for any
$\epsilon>0$ with
probability one,
none of $\lambda_{i_n}^{B_n},\ldots,\lambda_{i_n+\ell-1}^{B_n}$
can remain in $(\lambda_F^{\max}+\epsilon,\infty)$ with for all $n$ large.

Also, since the largest $i_n+\ell-1$ eigenvalues of ${\bf T}_n$ must be $o(n)$ (otherwise, $H$ would have additional mass on $[\lambda_H^{\max},\infty)$), $\lambda_{i_n}^{B_n},\ldots,
\lambda_{i_n+\ell-1}^{T_n}$ must all converge a.s. to $\lambda_F^{\max}$.

Similar results can be obtained for the interval to the left of $S_F$

As in Theorem \ref{th:jack 1} Tonelli's Theorem can easily be applied to establish equivalent
results when ${\bf T}_n$'s are random and independent of ${\bf X}$.

\subsection{The eigenvalues of the multivariate F matrix}\label{sec:appendix F}
Let ${\bf Y}_{ij}$ be another collection of i.i.d.
random variables (not necessarily having the same distribution as
the ${\bf X}_{ij}$'s), with $\exp {\bf Y}_{1\,1}=0$, $\exp|{\bf Y}_{1\,1}|=1$,
 $\exp|{\bf Y}_{1\,1}|^4<\infty$, and independent of the ${\bf X}_{ij}$'s.

We form the $n\times N$ matrix ${\bf Y}_n=(Y_{ij})$, $i=1,2,\ldots,n$, $j=1,2,\ldots,N$ with $N=N(n)$,
$n<N$, and $c^1_n\equiv n/N\to c_1\in(0,1)$ as $n\to\infty$.

Let now ${\bf T}_n=((1/N){\bf Y}_n{\bf Y}_n^*)^{-1}$, whenever the inverse exists.

From Bai and Yin's work \cite{bai93c} we know that with probability 1, for all $n$ large, ${\bf T}_n$ exists with
$\lambda_1^{T_n}\to (1-\sqrt {c_1})^{-2}$.  Whenever $\lambda_n^{(1/N){\bf Y}_n{\bf Y}_n^*}=0$ define ${\bf T}_n$ to be ${\bf I}$.

The matrix ${\bf T}_n(1/N){\bf X}_n{\bf X}_n^*$, typically called a multivariate $F$ matrix, has the same eigenvalues as ${\bf B}_n$.  Its limiting e.d.f. has density on $(0,\infty)$  given by
$$f_{c,c_1}(x)=\frac{(1-c_1)\sqrt{(x-b_1)(b_2-x)}}{2\pi x(xc_1+c)}
\quad b_1<x<b_2,$$
where
$$b_1=\left(\frac{1-\sqrt{1-(1-c)(1-c_1)}}{1-c_1}\right)^2,\quad
b_2=\left(\frac{1+\sqrt{1-(1-c)(1-c_1)}}{1-c_1}\right)^2.$$
When $c\in(0,1]$, there is no mass at $0$, whereas for $c>1$ $F$ has
mass $(1-(1/c))$ at $0$ \cite{silverstein85a}.

We are interested in the effect on spikes on the right side of the support of the $H_n$.

Because of the corollary to Theorem \ref{th:jack 1}, we know $\lambda_1^{B_n}\to b_2$ a.s. as $n\to\infty$.  We proceed in computing
the function
$$g(t)= c\int\frac{\lambda^2}{(\lambda-t)^2}dH(t).$$

We will see that it is unnecessary to compute the limiting e.d.f. of ${\bf T}_n$.
It suffices to know the limiting Stieltjes transform of $F^{(1/N)Y_nY_n^*}$.

Let $H_1$ denote the limiting e.d.f. of $F^{(1/N)Y_nY_n^*}$.  We have
$$g(t)=c\int\frac{(1/\lambda)^2}{(t-1/\lambda)^2}dH_1(\lambda)
=c\int\frac1{(\lambda t-1)^2}dH_1(\lambda)=ct^{-2}\int\frac1{(\lambda-(1/t))^2}
dH_1(\lambda)$$
$$=t^{-2}\frac{d}{dx}m_{H_1}(x)\biggr|_{x=(1/t)}.$$

We use (\ref{eq:jack 2}) to find $m_{F^{c_1,I_{[1,\infty)}}}$:
$$z=-\frac1m+c_1\frac1{1+m}\quad\Leftrightarrow\quad zm^2+(z+1-c_1)m+1=0.$$
$$\Leftrightarrow\quad m=\frac{-z-1+c_1\pm\sqrt{(z+1-c_1)^2-4z}}{2z}$$
(the sign depending on with branch of the square root is taken).
$$=\frac{-z-1+c_1\pm\sqrt{(z-(1-\sqrt{ c_1})^2)(z-(1+\sqrt{c_1})^2)}}{2z}.$$

From the identity in (\ref{eq:identity above 2}) we find that
$$m_{H_1}(z)=\frac{-z+1-c_1\pm\sqrt{(z-(1-\sqrt{c_1})^2)(z-(1+\sqrt{c_1})^2)}}
{2c_1z}.$$

As mentioned earlier the support of $H_1$ is $[(1-\sqrt{c_1})^2,
(1+\sqrt{c_1})^2]$.
We need $g(t)$ for $t>(1-\sqrt{c_1})^{-2}$, so we
need $m_{H_1}(x)$ for $x\in(0,(1-\sqrt{c_1})^2)$.

Since $0\in S'_{H_1}$,  $m_{H_1}(0)$ exists and is real, which dictates what
sign is taken on $(0,(1-\sqrt{c_1})^2)$. We find that, on this interval
\begin{equation}\label{eq:jack 8}
m_{H_1}(x)=\frac{-x+1-c_1-\sqrt{(x-(1-\sqrt{c_1})^2)(x-(1+\sqrt{c_1})^2)}}
{2c_1x},
\end{equation}
and using the fact that the discriminant equals $x^2-2x(1+c_1)+(1-c_1)^2$,
$$\frac{d}{dx}m_{H_1}(x)=
-\frac1{2c_1x^2}\left((1-c_1)+
\frac{x(1+c_1)-(1-c_1)^2}{\sqrt{(x-(1-\sqrt{c_1})^2)(x-(1+\sqrt{c_1})^2)}}
\right).$$

We therefore find that for $t>(1-\sqrt{c_1})^{-2}$
$$g(t)=\frac{c}{2c_1}\left(-(1-c_1)+\frac{t(1-c_1)^2-(1+c_1)}
{\sqrt{(1-t(1-\sqrt{c_1})^2)(1-t(1+\sqrt{c_1})^2)}}\right).$$

We see that the equation $g(t)=1$ leads to the following quadratic equation
in $t$:
$$(1-c_1)^2\alpha t^2-2(1+c_1)\alpha t+\alpha-c^2=0,\quad\text{where }
\alpha=c_1+c-cc_1,$$
giving us
$$t=\frac{(1+c_1)\alpha+\sqrt{(1+c_1)^2\alpha^2-(1-c_1)^2\alpha(\alpha-c^2)}}
{(1-c_1)^2\alpha},$$
The positive sign in front of the square root being correct due to
$$\frac{(1+c_1)}{(1-c_1)^2}=\frac{(1+c_1)}
{(1-\sqrt{c_1})^2(1+\sqrt{c_1})^2}<\frac1{(1-\sqrt{c_1})^2}.$$
Reducing further we find the threshold, $\tau$,  to be
\begin{equation}\label{eq:jack 9}
\tau=\frac{(1+c_1)\alpha+\sqrt{\alpha}\sqrt{4\alpha-c_1+(1-c_1)^2c^2}}
{(1-c_1)^2\alpha}=\frac{(1+c_1)\alpha+\sqrt{\alpha}(2c_1+c(1-c_1))}
{(1-c_1)^2\alpha}.
\end{equation}

We now compute the right hand side of (\ref{eq:jack 7a}). We have for $t'\ge\tau$
\begin{multline}\label{eq:jack 10}
 t'\left(1+c\int\frac{\lambda}{t'-\lambda}dH(\lambda)\right)
=t'\left(1+c\int\frac{1/\lambda}{t'-1/\lambda}dH_1(\lambda)\right)
=t'(1+c{t'}^{-1}m_{H_1}(1/t'))\\
=\frac{t'(2c_1+c(1-c_1))-c-
c\sqrt{(1-t'(1-\sqrt{c_1})^2)(1-t'(1+\sqrt{c_1})^2)}}{2c_1}\equiv\lambda(t').
\end{multline}
A straightforward (but tedious) calculation will yield $\lambda(\tau)=b_2$.

Using the results from the previous section, we have proved the following:

\begin{theorem}\label{th:jack 4}
Assume in addition to the assumptions in Theorem \ref{th:jack 1} on the ${\bf X}_{ij}$

(a) the ${\bf T}_n$, possibly random, are independent of the ${\bf X}_{ij}$, with
$F^{T_n}\darrow H$, a.s. as $n\to\infty$, $H$ being the limiting e.d.f. of
$F^{((1/N)Y_nY_n^*)^{-1}}$, defined above.

(b) Almost surely, there are $\ell$ (remaining finite for each realization)
eigenvalues of ${\bf T}_n$ converging to nonrandom $t'>(1-\sqrt{c_1})^{-2}$,
as $n\to\infty$.  Denote
by $\hat H_n$ the e.d.f. of the $n-\ell$ other eigenvalues of ${\bf T}_n$.

(c) With $\hat\lambda_n^{\max}$ defined to be the largest number in the support
of $\hat H_n$, with probability one, $\limsup_n\hat\lambda_n^{\max}<\tau$
the threshold defined in (\ref{eq:jack 9}).

Suppose $\lambda_{i_n}^{T_n},\ldots,\lambda_{i_n+\ell-1}^{T_n}$ are the eigenvalues stated in (b) of Theorem \ref{th:jack 2}.
Then, with the function $\lambda(\cdot)$ defined in (\ref{eq:jack 10}), with probability one

$$\lim_{n\to\infty}\lambda_{i_n}^{B_n}=\cdots=
\lim_{n\to\infty}\lambda_{i_n+\ell-1}^{B_n}=\begin{cases}
\hfill\lambda(t'),\hfill &\text{ if } t'>\tau\\ \hfill b_2,\hfill
&\text{ if } t'\in(\limsup_n\hat\lambda_n^{\max},\tau].\end{cases}$$
\end{theorem}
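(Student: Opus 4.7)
The plan is to assemble Theorem \ref{th:jack 4} as a direct specialization of the general spike result Theorem \ref{th:jack 2} (for the case $t' > \tau$) together with the extreme-eigenvalue discussion immediately preceding Section \ref{sec:appendix F} (for $t' \in (\limsup_n \hat\lambda_n^{\max}, \tau]$); all of the necessary calculus has already been carried out earlier in Section \ref{sec:appendix F}. Because the ${\bf T}_n$ are independent of the ${\bf X}_{ij}$, the Tonelli-theorem argument used just after Theorem \ref{th:jack 1} lets me restrict attention to a single realization of $\{{\bf T}_n\}$ in the full-measure event on which (a)--(c) hold; on this event $\lambda_1^{T_n}$ stays bounded (close to $(1-\sqrt{c_1})^{-2}$), supplying the spectral-norm boundedness required by Theorems \ref{th:jack 1}--\ref{th:jack 3}.

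Next I would verify, using the closed form of $m_{H_1}$ in (\ref{eq:jack 8}), that on $((1-\sqrt{c_1})^{-2}, \infty)$ the function $g(t) = c\int \lambda^2/(\lambda-t)^2\, dH(\lambda)$ is strictly decreasing and equals $1$ at the unique point $\tau$ given by (\ref{eq:jack 9}). This uses only the identity $g(t) = t^{-2}(d/dx)m_{H_1}(x)\bigr|_{x=1/t}$ obtained from the change of variable $\lambda \mapsto 1/\lambda$ (noting that $H$ is the pushforward of $H_1$ under this map), and it is carried out line by line in Section \ref{sec:appendix F}.

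For $t' > \tau$ I would apply Theorem \ref{th:jack 2} with $t_a \in (\max(\limsup_n \hat\lambda_n^{\max}, \tau),\, t')$ and $t_b > t'$. By monotonicity $g(t_a), g(t_b) < 1$, and since $\hat H_n \to H$ weakly (removing $\ell = O(1)$ atoms does not change the limit), dominated convergence upgrades this to $g_n(t_a), g_n(t_b) \leq 1$ for all large $n$, verifying hypothesis (b); hypothesis (c) supplies an open interval around $[t_a, t_b]$ outside $S_{\hat H_n}$; hypothesis (a) is assumed. Theorem \ref{th:jack 2} then yields the almost-sure limit $t'\bigl(1 + c\int \lambda/(t'-\lambda)\, dH(\lambda)\bigr)$ for each of the $\ell$ spike-induced eigenvalues, and a final change of variable $\lambda \mapsto 1/\lambda$ rewrites this as $t'(1 + c t'^{-1} m_{H_1}(1/t')) = \lambda(t')$, exactly as in (\ref{eq:jack 10}). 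For $t' \in (\limsup_n \hat\lambda_n^{\max}, \tau]$ I would invoke the edge discussion preceding Section \ref{sec:appendix F}: the corollary to Theorem \ref{th:jack 1} applied to the de-spiked sequence forces the bulk top eigenvalue up to $b_2$, so the $\ell$ spike-induced eigenvalues cannot lie eventually below $b_2$; conversely, if any converged to some $\lambda^{\ast} > b_2$, then the contrapositive of Theorem \ref{th:jack 3} would produce a preimage $t^{\ast} = -1/m_{c,H}(\lambda^{\ast}) > \tau$ attached to $t'$, which is ruled out by $t' \leq \tau$. Hence each limit equals $b_2$.

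The most delicate step, and the expected main obstacle, is the $t' \in (\limsup_n \hat\lambda_n^{\max}, \tau]$ case: one has to rule out convergence of any spike-induced eigenvalue of ${\bf B}_n$ to an intermediate point strictly between $\limsup_n \hat\lambda_n^{\max}$ and $b_2$, and this uses the contrapositive of Theorem \ref{th:jack 3} in an essential way together with the sharp edge convergence $\lambda_1^{B_n} \to b_2$. Every other ingredient is either a direct invocation of the now-established general theorems \ref{th:jack 1}--\ref{th:jack 3} or a closed-form Stieltjes-transform computation whose details are already on the page in Section \ref{sec:appendix F}.
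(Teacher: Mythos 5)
Your proposal is essentially correct and follows the same structure as the paper's own (quite terse) argument, which assembles Theorem~\ref{th:jack 4} from Theorem~\ref{th:jack 2}, the converse Theorem~\ref{th:jack 3}, the ``Behavior of extreme eigenvalues'' discussion, the explicit Stieltjes-transform computations of Section~\ref{sec:appendix F}, and the Tonelli reduction to deterministic $\{{\bf T}_n\}$. Your $t'>\tau$ branch is the paper's argument with $t_a$ chosen slightly above $\tau$ (the paper takes $t_a=\tau$ and $t_b$ arbitrarily large); both are fine, since $g$ is strictly decreasing on $(\hat\lambda_n^{\max},\infty)$, $g(\tau)=1$, and dominated convergence passes the inequality to $g_n$. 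For the $t'\le\tau$ branch, the upper estimate $\limsup\lambda_{i_n+j}^{B_n}\le b_2$ via the converse Theorem~\ref{th:jack 3} matches the paper; for the lower estimate you use interlacing with the de-spiked sequence combined with the corollary, while the paper appeals to $o(n)$-negligibility of the spikes for e.d.f.\ convergence --- these are interchangeable, and your interlacing route (which works because the eigenvalues of ${\bf T}_n^{1/2}S_n{\bf T}_n^{1/2}$ and $S_n^{1/2}{\bf T}_nS_n^{1/2}$ agree, reducing the congruence to a rank-$\ell$ PSD perturbation where Weyl applies) is a clean way to make explicit what the paper leaves implicit. The one place you should tighten is the upper bound: Theorem~\ref{th:jack 3} as stated requires convergence in probability to a single nonrandom $\lambda'$, so a contradiction from ``$\limsup>b_2$'' needs a short subsequence/compactness step (the eigenvalues of $B_n$ are a.s.\ uniformly bounded since $\|{\bf T}_n\|$ and $\lambda_1^{(1/m)X_nX_n^*}$ are) to extract a subsequential limit $\lambda^\ast>b_2$ before invoking the theorem; the paper glosses over this in the same way, so it is not a gap relative to the source, just a spot worth flagging.
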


Note: From Theorem \ref{th:jack 2}, when $t'>\tau$ the result can allow $\ell=o(n)$.

\bibliographystyle{IEEEtran}
\bibliography{randbib}

\end{document}